\numberwithin{equation}{section}
\ifdraft{\doublespace}{\singlespace} 
\colorlet{Changes@Color}{red}
\renewcommand{\vec}[1]{\mathbf{#1}}
\def\ie{\rm{{\em i.e.}\ }}
\def\eg{\rm{{\em e.g.}\ }}
\newtheorem{theorem}{Theorem}
\newtheorem{proposition}[theorem]{Proposition}
\theoremstyle{definition}
\theoremstyle{remark}
\journal{Computational Statistics \& Data Analysis}
\begin{document}

\ifdraft{\listofchanges[style=list]}{}

\clearpage
\setcounter{page}{1}
\begin{frontmatter}

\title{{\small{\color{red} A. Garbuno-Inigo, F.A. DiazDelaO, K.M. Zuev, Gaussian
process hyper-parameter estimation using Parallel Asymptotically Independent
Markov Sampling, Computational Statistics \& Data Analysis, Volume 103, November
2016.}} \\ Gaussian process hyper-parameter estimation using parallel
asymptotically independent Markov sampling } % Article title

\author[liv]{A. Garbuno-Inigo \corref{cor1}}
\ead{agarbuno@liv.ac.uk}

\author[liv]{F. A. DiazDelaO }

\author[liv]{K. M. Zuev}

\cortext[cor1]{Corresponding author}

\address[liv]{Institute for Risk and Uncertainty, School of Engineering,
 University of Liverpool\\ Brownlow Hill, Liverpool L69 3GH, United Kingdom}

\begin{abstract}

Gaussian process emulators of computationally expensive computer codes provide
fast statistical approximations to model physical processes. The training of
these surrogates depends on the set of design points chosen to run the
simulator. Due to computational cost, such training set is bound to be limited
and quantifying the resulting uncertainty in the hyper-parameters of the
emulator by uni-modal distributions is likely to induce bias. In order to
quantify this uncertainty, this paper proposes a computationally efficient
sampler based on an extension of Asymptotically Independent Markov Sampling, a
recently developed algorithm for Bayesian inference. Structural uncertainty of
the emulator is obtained as a by-product of the Bayesian treatment of the hyper-
parameters. Additionally, the user can choose to perform stochastic optimisation
to sample from a neighbourhood of the Maximum a Posteriori estimate, even in the
presence of multimodality. Model uncertainty is also acknowledged through
numerical stabilisation measures by including a nugget term in the formulation
of the probability model. The efficiency of the proposed sampler is illustrated
in examples where multi-modal distributions are encountered. For the purpose of
reproducibility, further development, and use in other applications the code
used to generate the examples is freely available for download at
\url{https://github.com/agarbuno/paims_codes}.

\end{abstract}

\begin{keyword} 

Gaussian process \sep hyper-parameter \sep marginalisation \sep optimisation \sep MCMC \sep
simulated annealing \end{keyword}

\end{frontmatter}

\section{Introduction}

Computationally expensive computer codes are frequently needed to implement
mathematical models which are assumed to be reliable approximations to physical
processes. Such simulators often require intensive use of computational
resources that makes them inefficient if further exploitation of the code is
needed, \eg optimisation, uncertainty propagation and sensitivity
analysis \citep{Forrester2008, Kennedy2001}. For this reason, surrogate models
are needed to perform fast approximations to the output of demanding simulators
and enable efficient exploration and exploitation of the input space. In this
context, Gaussian processes are a common choice to build statistical surrogates
-also known as {\em emulators}- which allow to take into account the uncertainty
derived from the inability to evaluate the original model in the whole input
space. Gaussian processes have become popular in recent years due to their
ability to fit complex mappings between outputs and inputs by means of a 
non-parametric hierarchical structure. Such applications are found, amongst many
other areas, in Machine Learning \citep{Rasmussen2006}, Spatial Statistics
\citep{Cressie1993a} (with the name of Kriging), likelihood-free Bayesian
Inference \citep{Wilkinson2014} and Genetics \citep{Kalaitzis2011}.

To build an emulator, a number of runs from the simulator is needed, but due to
computing limitations only a small amount of evaluations can be performed. With
a small amount of data, it is possible that the uncertainty of the parameters of
the model cannot be described by a clearly uni-modal distribution. In such
scenarios, Model Uncertainty Analysis \citep{Draper1995} is capable of setting a
proper framework in which we acknowledge all uncertainties related to the
idealisations made through the modelling assumptions and the available, albeit
limited information. To this end, {\em hierarchical modelling} should be
taken into account. This corresponds to adding a layer of structural uncertainty
to the assumed emulator either in a continuous or discrete manner \citep[see][\S
4]{Draper1995}. In the case of Gaussian processes, continuous structural
uncertainty can be accounted for as a natural by-product from a Bayesian
procedure. Hence, this is pursued in this work by focusing on samplers capable
of exploring multi-modal distributions.

In order for the Gaussian process to be able to replicate the relation between
inputs and outputs and make predictions, a training phase is necessary. Such
training involves the estimation of the parameters of the Gaussian process from
the data collected by running the simulator. These parameters are referred to as
{\em hyper-parameters}. The selection of the hyper-parameters is usually done by
using Maximum Likelihood estimates (MLE), or their Bayesian counterpart Maximum
a Posteriori estimates (MAP) \citep{Oakley1999, Rasmussen2006}, or by sampling
from the posterior distribution \citep{Williams1996} in a fully Bayesian manner.

In this paper we assume a scenario where the task of generating new runs from
the simulator is prohibitive. Such limited information is not enough to
completely identify either a candidate or a region of appropriate candidates for
the hyper-parameters. In this scenario, traditional optimisation routines
\citep{Nocedal2004} are not able to guarantee global optima when looking for the
MLE or MAP, and a Bayesian treatment is the only option to account for all the
uncertainties in the modelling. In the literature, however, it is common to see
that MLE or MAP alternatives are preferred \citep{Kennedy2001, Gibbs1998} due to
the numerical burden of maximising the likelihood function or because it is
assumed that Bayesian integration will not produce results worth the effort.
Though it is a strong argument in favour of estimating isolated candidates, in
high-dimensional applications it is difficult to assess if the number of runs of
the simulator is sufficient to produce robust hyper-parameters. Robustness is
usually measured with a prediction-oriented metric such as root-mean-square
error (RMSE) \citep{Kennedy2001a}, ignoring uncertainty and risk assessment of
choosing a single candidate of the hyper-parameters by an inference process with
limited data. In order to account for such uncertainty in the hyper-parameters
when making predictions, numerical integration should be performed. However,
methods as quadrature approximation become infeasible as the number of
dimensions increases \citep{Kennedy2001}. Therefore, an appropriate approach is
to perform Monte Carlo integration \citep{MacKay1998}. This allows to
approximate any integral by means of a weighted sum, given a sample from the
{\em correct} distribution.

In Gaussian processes, as in many other applications of statistics, the target
distribution of the hyper-parameters cannot be sampled directly and one should
resort to Markov Chain Monte Carlo (MCMC) methods \citep{Robert2004}. MCMC
methods are powerful statistical tools but have a number of drawbacks if not
tuned properly, particularly if one wishes to sample from multi-modal
distributions \citep{Neal2001, Hankin2005}. One of such limitations is the
tuning of the proposal distribution, which allows the generation of a candidate
in the chain. This proposal function has to be tuned with parameters that define
its ability to move through the sample space. If an excessively wide spread is
selected, this will produce samples with space-filling properties but which are
likely to be rejected. On the other hand, having a narrower spread will cause an
inefficient exploration of the sample space by taking short updates of the
states of the chain, known in the literature as {\em Random Walk} behaviour
\citep{Neal1993}. In practice it is desirable to use a proposal distribution
which is capable of balancing both extremes. To find an appropriate tuning in
high-dimensional spaces with sets of highly correlated variables can be an
overwhelming task and often MCMC samplers can become expensive due to the long
time needed to reach stationarity \citep{Ching2007}. \cite{Neal1998a} and
\cite{Williams1996} favour the Hybrid Monte Carlo (HMC) method to generate a
sample from the posterior distribution, preventing the random walk behaviour of
traditional MCMC methods. If tuned correctly, the HMC should be able to explore
most of the input space \citep{Liu2008}. Such tuning process is problem-dependent 
and there is no guarantee that the method will sample from all
existing modes, thus failing to adapt well to multi-modal distributions
\citep{Neal2011a}.

This paper proposes a sampler for the hyper-parameters of a Gaussian process
based on recently developed methods for Bayesian inference problems.
Additionally, it uses the Transitional Markov Chain Monte Carlo (TMCMC) method
of \cite{Ching2007} to set a framework for the parallelisation of Asymptotically
Independent Markov Sampling in both the context of hyper-parameter sampling
(AIMS) \citep{Beck} and in stochastic optimisation (AIMS-OPT) \citep{Zuev2013}
reminiscent of Stochastic Subset Optimisation
\citep{Taflanidis2008,Taflanidis2008a}. Such an extension is built using
concepts of Particle Filtering methods \citep{Andrieu2010, Gramacy2009},
Adaptive Sequential Monte Carlo \citep{DelMoral2006, DelMoral2012} and Delayed
Rejection Samplers \citep{Zuev2011, Mira2001a}. AIMS is chosen since it provides
a framework for Sequential Monte Carlo sampling \citep{Neal1996, Neal2001,
DelMoral2006} which automatically chooses the sequence of transitions. Moreover,
it uses most of the information generated in the previous step in the sequence
as opposed to traditional sequential methods, thus building a robust sampler
when applied to multi-modal distributions. \added{Finally, by using the AIMS-OPT
algorithm a solution is built by means of a nested sequence of subsets, which
converges to the optimal solution set. The algorithm can be terminated
prematurely given a previously chosen accuracy threshold, thus providing a set
of nearly optimal solutions. Whether it is composed by a single element, or a
set of elements whose objective function differs by a negligible quantity, a
full characterisation of the optimal solution is achieved. This contrasts with
the capabilities of other stochastic optimisation schemes such as particle swarm
optimisation or genetic algorithms \citep{Schneider2007}.}

By selecting the hyper-parameters using the AIMS-OPT framework the effect is
twofold. First, the uncertainty inherent to the specification of the 
hyper-parameters is embedded in the set of suboptimal approximations to the
solution. This uncertainty, expressed in a mixture of Gaussian process
emulators, yields a robust surrogate where model uncertainty is accounted for.
Second, computational implementation deficiencies of the inference procedure in
Gaussian processes is overcome by incorporating stabilising approaches exposed
in the literature as in \cite{Ranjan2011, Andrianakis2012} but in a Bayesian
framework. The problem is therefore treated from both a probabilistic and an
optimisation perspective.

The paper is organised as follows. In Section \ref{sec:gps}, a brief introduction
to the Gaussian processes and their treatment by Bayesian inference is
discussed. Section \ref{sec:aims} presents both the AIMS algorithm and the
proper generalisation for a parallel implementation. Section
\ref{sec:implementation} discusses several aspects of the computational
implementation of the algorithm and their effect on the modelling assumptions.
The efficiency and robustness of the proposed sampler are discussed in Section
\ref{sec:numerical} with some illustrative examples. Concluding remarks are
given in Section \ref{sec:conclusions}.

%------------------------------------------------

\section{Gaussian processes} \label{sec:gps}

Let $X = \{ \vec{x}_1, \ldots, \vec{x}_n \} $ be the set of trials run by the
simulator where $\vec{x}_i \in \mathbb{R}^p$ denotes a given configuration for
the model. The set $X$ will be referred to as the set of {\em design points}.
Let $ \vec{y} = \{ y_1, \ldots, y_n \}$ be the set of outputs observed for the
design points. The pair $(\vec{x}_i, y_i)$ will denote the {\em training run}
being used to learn the emulator that approximates the simulator. The emulator
is assumed to be a real-valued mapping $\eta: \mathbb{R}^p \rightarrow
\mathbb{R}$ which is an interpolator of the training runs, \ie $y_i =
\eta(\vec{x}_i)$ for all $i= 1, \ldots, n$. This omits any random error in the
output of the computer code in the observed simulations, that is, the simulator
is deterministic. It is assumed that the output of the simulator can be
represented by a Gaussian process. Therefore, the set of design points is
assumed to have a joint Gaussian distribution where the output satisfies the
structure
\begin{align}
\eta(\vec{x}) = h(\vec{x})^T \boldsymbol\beta + Z(\vec{x}|\sigma^2, \boldsymbol\phi),
\end{align}
where $h(\cdot)$ is a vector of known basis (location) functions of the input,
$\boldsymbol\beta$ is a vector of regression coefficients, and
$Z(\cdot|\sigma^2, \boldsymbol\phi)$ is a Gaussian process with zero mean and
covariance function
\begin{align}
\text{cov}(\vec{x}, \vec{x}'| \sigma^2, \boldsymbol\phi) = \sigma^2 \, k (\vec{x}, \vec{x}'|
\boldsymbol\phi), \label{eq:correlation}
\end{align}
where $\sigma^2$ is the signal noise and $\boldsymbol\phi \in \mathbb{R}^p_+$
denotes the {\em length-scale} parameters of the correlation function $k(\cdot,
\cdot)$.  Note that for a pair of design points$(\vec{x},\vec{x}')$, the
function  $k(\cdot, \cdot | \boldsymbol\phi)$ measures the correlation between
$\eta(\vec{x})$ and $\eta(\vec{x}')$ based on their respective input
configurations. The effect of different values of $\boldsymbol\phi$ in a 
one-dimensional example is depicted in Figure \ref{fig:length_scale}.
\begin{figure}[H]
\centering 
\includegraphics[draft=false,width=.4\linewidth]{./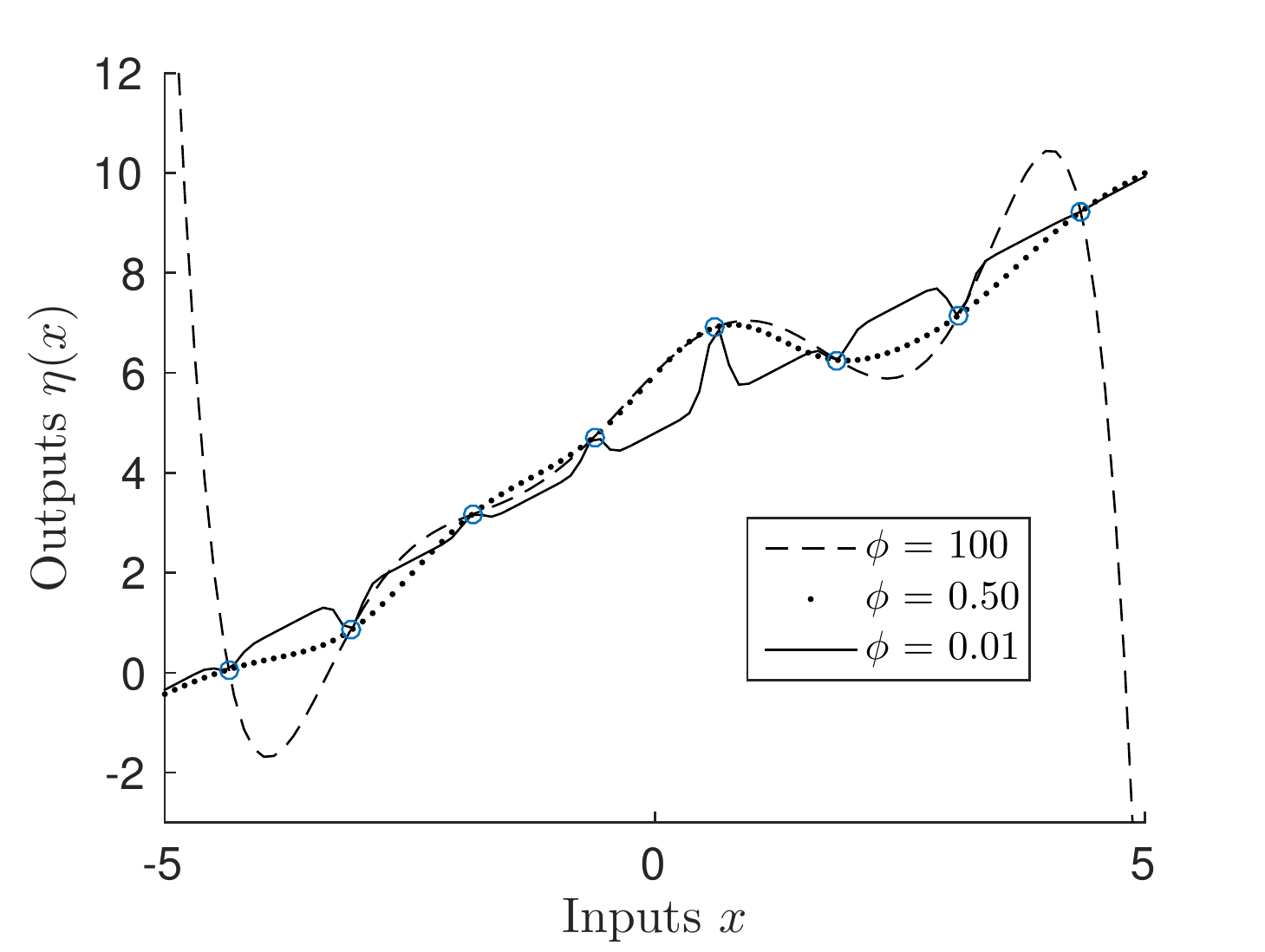}
\caption{The length-scale parameters represent how sensitive is the output of
the simulator to variations in each dimension. The plot corresponds to 8 design
points chosen for the function  $\eta(x) = 5 + x + \cos(x) + .5 \, \sin(3x)$.
For low values of the length-scale parameter the training runs are less
dependent of each other. }
\label{fig:length_scale}
\end{figure}

The role of the correlation function is to measure how close to each other the
design points are, following the assumption that similar input configurations
should produce similar outputs. For its analytical simplicity, interpretation
and smoothness properties, this work uses the squared-exponential correlation
function, namely
\begin{align}
k(\vec{x},\vec{x}'|\boldsymbol\phi) = \exp \left\lbrace - \frac{1}{2} \displaystyle \sum_{i=1}^p
\frac{(x_i - x'_i)^2}{\phi_i}\right\rbrace. 
\end{align}
Note that other authors prefer the parametrisation with $\phi^2_i$ as
denominators. However, this work uses a linear term in the denominator since the
restriction of the length-scale parameters to lie in the positive orthant is
more natural, as weights in the norm used to measure closeness and sensitivity
to changes in such dimensions. \added{Both interpretability and numerical
performance can be improved if the length-scales refer to the same units, which
leads to rescaling all dimensions of the input configurations. In the computer
simulation terminology this translates in utilising experimental designs
restricted to hypercubes, such as Latin hypercube sampling or Sobol sequences.
Design of experiments is an active area of research outside the scope of this
work.}
	
In summary, the output of a design point, given the parameters
$\boldsymbol\beta, \sigma^2$ and $\boldsymbol\phi$, has a Gaussian distribution
\begin{align}
y | \vec{x}, \boldsymbol\beta, \sigma^2, \boldsymbol\phi \sim \mathcal{N}(h(\vec{x})^T
\boldsymbol\beta, \sigma^2 \, k (\vec{x}, \vec{x}'| \boldsymbol\phi) ),
\end{align}
which can be rewritten as the joint distribution of the vector of outputs
$\vec{y}$ conditional on the design points $X$ and hyper-parameters
$\boldsymbol\beta, \sigma^2$ and $\boldsymbol\phi$ as
\begin{align}
{\bf y} | X, \boldsymbol\beta, \sigma^2, \boldsymbol\phi \sim \mathcal{N}(H \boldsymbol\beta, 
\sigma^2 \, K ), \label{eq:normal}
\end{align}
where $H$ is the {\em design matrix} whose rows are the inputs $h(\vec{x}_i)^T$
and $K$ is the correlation matrix with elements $K_{ij} = k(\vec{x}_i,
\vec{x}_j| \boldsymbol\phi)$ for all $i,j = 1, \ldots, n$. 

\subsection{Estimating the hyper-parameters}

The parameters of the process are not known beforehand and this induces
uncertainty in the emulator itself. They can be estimated by Maximum Likelihood
principles, but doing so lacks rigorous uncertainty quantification by
concentrating all the density of the unknown quantities in a single value. The
alternative is to treat them in a fully Bayesian manner and marginalise them
when performing predictions. This way their respective uncertainty is taken into
account. In this scenario, the prediction $y^*$ for a non-observed configuration
$\vec{x}^*$ can be performed with the data available, $\mathcal{D} = ({\bf y},
X)$, and the evidence they shed on the parameters of the Gaussian process.
Therefore, the predictions should be made with the marginalised posterior
distribution
\begin{align}
p(y^*|\vec{x}^*, \mathcal{D}) = \int_\Theta p(y^*|\vec{x}^*, \mathcal{D}, \boldsymbol\theta) \,
p(\boldsymbol\theta | \mathcal{D}) \, d\boldsymbol\theta \label{eq:pred_post},
\end{align}
where $\boldsymbol\theta = (\boldsymbol\beta, \sigma^2, \boldsymbol\phi)$
denotes the complete vector of hyper-parameters. One should note that given the
properties of a collection of Gaussian random variables, a prediction for $y^*$
conditioned in the data and $\boldsymbol\theta$ is also a Gaussian random
variable \citep[see][]{Oakley1999}. As in hierarchical modelling, each possible
value of $\boldsymbol\theta$ defines a specific realisation of a Gaussian
distribution, so it is appropriate to refer to $\boldsymbol\theta$ as the 
hyper-parameters of the Gaussian process.

Due to its computational complexity, the integral in \eqref{eq:pred_post} is
often omitted when making predictions. It is commonly assumed that the MLE of
the likelihood
\begin{align}
\mathcal{L}(\boldsymbol\theta) = p(\vec{y}|X,\boldsymbol\beta, \sigma^2, \boldsymbol\phi), 
\label{eq:likelihood}
\end{align}
or the MAP estimate from the posterior distribution 
\begin{align}
p(\boldsymbol\theta | \mathcal{D}) \propto p({\bf y}|X,\boldsymbol\beta, \sigma^2, \boldsymbol\phi)
\, \, p(\boldsymbol\beta, \sigma^2, \boldsymbol\phi), \label{eq:posterior}
\end{align}
are robust enough to account for all the uncertainty in the modelling. However,
when either the likelihood \eqref{eq:likelihood} is a non-convex function or the
posterior \eqref{eq:posterior} is a multi-modal distribution, conventional
optimisation routines might only find local optima, thus failing to find the
most probable candidate of such distribution. Moreover, by selecting only one
candidate, robustness and uncertainty quantification are lost in the process.
Additionally, there are degenerate cases when it is crucial to estimate the
integral in \eqref{eq:pred_post} by means of Monte Carlo simulation instead of
by proposing a single candidate. As it has been noted by \cite{Andrianakis2012},
two extreme cases for the Gaussian process length-scale hyper-parameters can be
identified. One possibility is for $\phi$ to approach infinity, which makes
every design point dependent of each other; the other, when $\phi$ approaches
the origin where a multivariate regression model becomes the  limiting case. In
the first case, high correlation among all the training runs results in a model
which is not able to distinguish local dependencies. As for the second, it
violates the assumptions that constitute a Gaussian process, by completely
ignoring the correlation structure in the design points to predict the output.
Consequently, if MCMC is performed one can approximate the integrated predictive
distribution in \eqref{eq:pred_post} by means of
\begin{align}
p(y^*|\vec{x}^*, \mathcal{D}) \approx \sum_{i=1}^N w_i \,\, p(y^*|\vec{x}^*, \mathcal{D},
\boldsymbol\theta_i) \label{eq:mix_posterior} , 
\end{align} 
where $\boldsymbol\theta_i$ is obtained through an appropriate sampler, \ie one
capable of sampling from multi-modal distributions. The coefficients $w_i$
denote the weights of each sample generated. Since each term $p(y^*|\vec{x}^*,
\mathcal{D}, \boldsymbol\theta_i)$ corresponds to a Gaussian density function,
the predictions are made by a mixture of Gaussians.

\begin{proposition}

If the emulator output $y^*$ conditional on its configuration vector $\vec{x}^*$
has a posterior density as in \eqref{eq:mix_posterior}, then its mean function
and covariance function can be computed as
\begin{align}
\mu(\vec{x}^*) &= \sum_{i=1}^N w_i \,\, \mu_i(\vec{x}^*), \label{eq:mix_mean}\\
\text{cov}(\vec{x}^*,\vec{x}') &= \sum_{i=1}^N w_i \,\, \left[ (\mu_i(\vec{x}^*) - \mu(\vec{x}^*))
(\mu_i(\vec{x}') - \mu(\vec{x}')) + \text{cov}(\vec{x}^*,\vec{x}' | \boldsymbol\theta_i) \right],
\label{eq:mix_cov}
\end{align}
where $\mu_i(\vec{x}^*)$ denotes the expected value of the likelihood
distribution of $y^*$ conditional on the hyper-parameters $\boldsymbol
\theta_i$, the training runs $\mathcal{D}$ and the input configuration
$\vec{x}^*$.
\end{proposition}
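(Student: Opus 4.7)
The plan is to treat this as a straightforward computation about moments of a finite mixture distribution. Since each sampled $\boldsymbol\theta_i$ indexes a component density $p(y^*|\vec{x}^*,\mathcal{D},\boldsymbol\theta_i)$ and the weights $w_i$ form a (normalised) convex combination, the two identities are direct consequences of the law of total expectation and the law of total covariance applied to the latent discrete variable that selects the mixture component.

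First I would derive \eqref{eq:mix_mean}. By definition of the mixture in \eqref{eq:mix_posterior},
\begin{align*}
\mu(\vec{x}^*) = \int y^* \, p(y^*|\vec{x}^*,\mathcal{D})\, dy^* = \sum_{i=1}^N w_i \int y^* \, p(y^*|\vec{x}^*,\mathcal{D},\boldsymbol\theta_i)\, dy^* = \sum_{i=1}^N w_i\, \mu_i(\vec{x}^*),
\end{align*}
where the interchange of sum and integral is valid because the sum is finite.

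Next I would tackle \eqref{eq:mix_cov}, which is the real content of the proposition. Writing the cross-moment of the mixture as $E[y^*y'] = \sum_i w_i\, E[y^*y'|\boldsymbol\theta_i]$ and using, for each component, the identity $E[y^*y'|\boldsymbol\theta_i] = \mathrm{cov}(\vec{x}^*,\vec{x}'|\boldsymbol\theta_i) + \mu_i(\vec{x}^*)\mu_i(\vec{x}')$, one obtains
\begin{align*}
\mathrm{cov}(\vec{x}^*,\vec{x}') = \sum_{i=1}^N w_i\left[\mathrm{cov}(\vec{x}^*,\vec{x}'|\boldsymbol\theta_i) + \mu_i(\vec{x}^*)\mu_i(\vec{x}')\right] - \mu(\vec{x}^*)\mu(\vec{x}').
\end{align*}
The remaining step is algebraic: expand $\sum_i w_i(\mu_i(\vec{x}^*)-\mu(\vec{x}^*))(\mu_i(\vec{x}')-\mu(\vec{x}'))$, invoking $\sum_i w_i = 1$ and \eqref{eq:mix_mean}, to see that the three cross terms collapse to $-\mu(\vec{x}^*)\mu(\vec{x}')$ and recover the claimed form.

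The only real obstacle, and it is a modest one, is bookkeeping the cross terms in the covariance expansion and making explicit the normalisation $\sum_i w_i = 1$; if the $w_i$ produced by the sampler are only proportional to importance weights, this should be stated as a prior normalisation step. Apart from that, the argument is essentially the standard decomposition of the variance of a mixture into a within-component and a between-component contribution, so I would present it in two short displayed computations rather than a lengthy derivation.
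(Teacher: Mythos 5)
Your proof is correct and follows essentially the same route as the paper, which disposes of \eqref{eq:mix_mean} by the tower property and of \eqref{eq:mix_cov} by the covariance decomposition formula (law of total covariance) with the weights $w_i$ acting as the auxiliary discrete distribution on the conditioning; your cross-moment calculation is simply that decomposition written out explicitly, and your remark on normalising $\sum_i w_i = 1$ matches the paper's implicit treatment of the $w_i$ as a probability vector.
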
 
\begin{proof}
Equality in \eqref{eq:mix_mean} is a direct application of the tower property of
conditional expectation and \eqref{eq:mix_cov} follows from the covariance
decomposition formula using the vector of weights $w_i$ as an auxiliary
probability distribution on the conditioning.
\end{proof}
\noindent From equation \eqref{eq:mix_cov} we can compute the variance, also
known as the prediction error, of an untested configuration $\vec{x}^*$ as
\begin{align}
\replaced{\text{var}}{\sigma^2}(\vec{x^*}) &= \sum_{i=1}^N w_i \,\, ( (\mu_i(\vec{x^*}) - \mu(\vec{x^*}))^2 +
\sigma_i^2(\vec{x^*})). \label{eq:mix_sigma}
\end{align} 
\noindent By doing this, a more robust estimation of the prediction error is
made since it balances the predicted error in one sample with how far the
prediction of such sample is from the overall estimation of the mixture.

\subsection{Prior distributions}

In order to perform a Bayesian treatment for the prediction task in equation
\eqref{eq:pred_post} the prior distribution $ p(\boldsymbol\beta, \sigma^2,
\boldsymbol\phi)$ in equation \eqref{eq:posterior} has to be specified. Weak
prior distributions are commonly used for $\boldsymbol\beta$ and $\sigma^2$
\citep{Oakley1999}. Such weak prior has the form
\begin{align}
 p(\boldsymbol\beta, \sigma^2, \boldsymbol\phi) \propto	\frac{p(\boldsymbol\phi)}{\sigma^2},
\label{eq:prior}
\end{align}
where it is assumed a priori that both the covariance and the mean 
hyper-parameters are independent. Even more, $\boldsymbol\beta$ and $\sigma^2$
are assumed to have an improper non-informative distribution.

As for the length-scale hyper-parameter $\boldsymbol\phi$, a prior distribution
$p(\boldsymbol\phi)$ is still needed. In this case the reference prior
\citep[studied by][]{Berger1992a, Berger2009} sets an objective framework to
account for the uncertainty of $\boldsymbol\phi$, thus avoiding any potential
bias induced by the modelling assumptions. This prior is built based on
Shannon's expected information criteria and allows the use of a prior
distribution in a setting where no previous knowledge is assumed. That way, the
training runs are the only source of information for the inference process.
Additionally, the reference prior is capable of ruling out subspaces of the
sample space of the hyper-parameters \citep{Andrianakis2011}, thus reducing
regions of possible candidates of Gaussian distributions in the mixture model in
equation \eqref{eq:mix_posterior}. Since this provides an off-the-shelf
framework for the estimation of the hyper-parameters, the reference  prior
developed by \cite{Paulo2005} is used in this work. However, there are no known
analytical expressions for its derivatives which limits its application to MCMC
samplers that use gradient information. Note that there are other possibilities
available for the prior distribution of $\boldsymbol\phi$. Examples of these are
the log-normal or log-Laplacian distributions, which can be interpreted as a
regularisation in the norm of the parameters. \cite{Andrianakis2011} suggest a
decaying prior. Another option is to elicit prior distributions from expert
knowledge as in \cite{Oakley2002a}. 

\subsection{Marginalising the nuisance hyper-parameters}

The nature of the hyper-parameters $\boldsymbol\beta, \sigma^2$ and
$\boldsymbol\phi$ is potentially different in terms of scales and dynamics, as
seen and explained in Figure \ref{fig:dynamics}. It is possible to cope with
this limitations by using a Gibbs sampling framework, but it is well-known that
such sampling scheme can be inefficient if it is used for multi-modal
distributions in higher dimensions. Analogously, a Metropolis-Hastings sampler
can also be overwhelmed.
\begin{figure}[H]
\centering
\subfloat[Correlations]{\includegraphics[draft=false,width=.32\linewidth]{./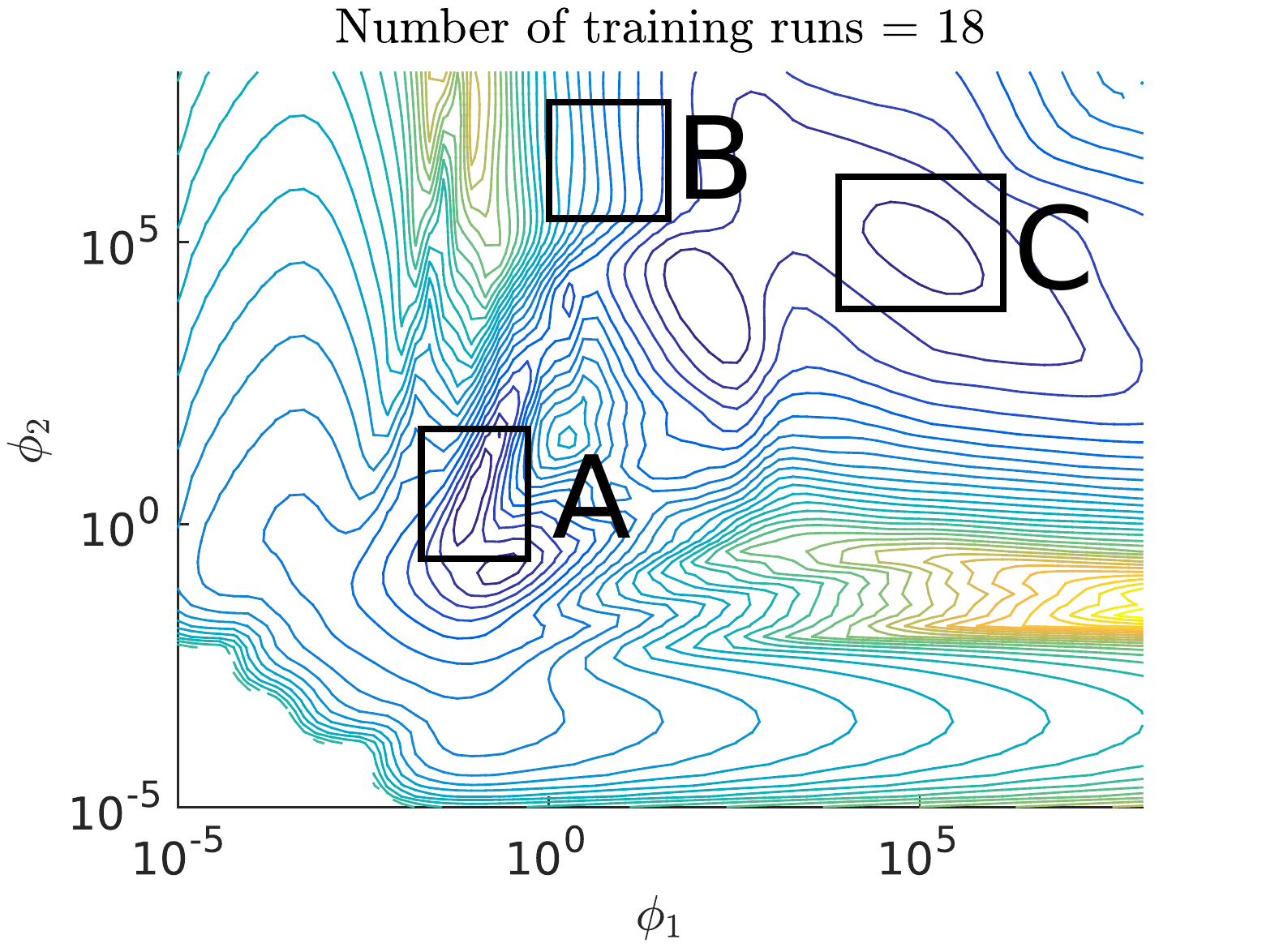} \label{subfig:dynamics01}} \qquad
\subfloat[Scales]{\includegraphics[draft=false,width=.32\linewidth]{./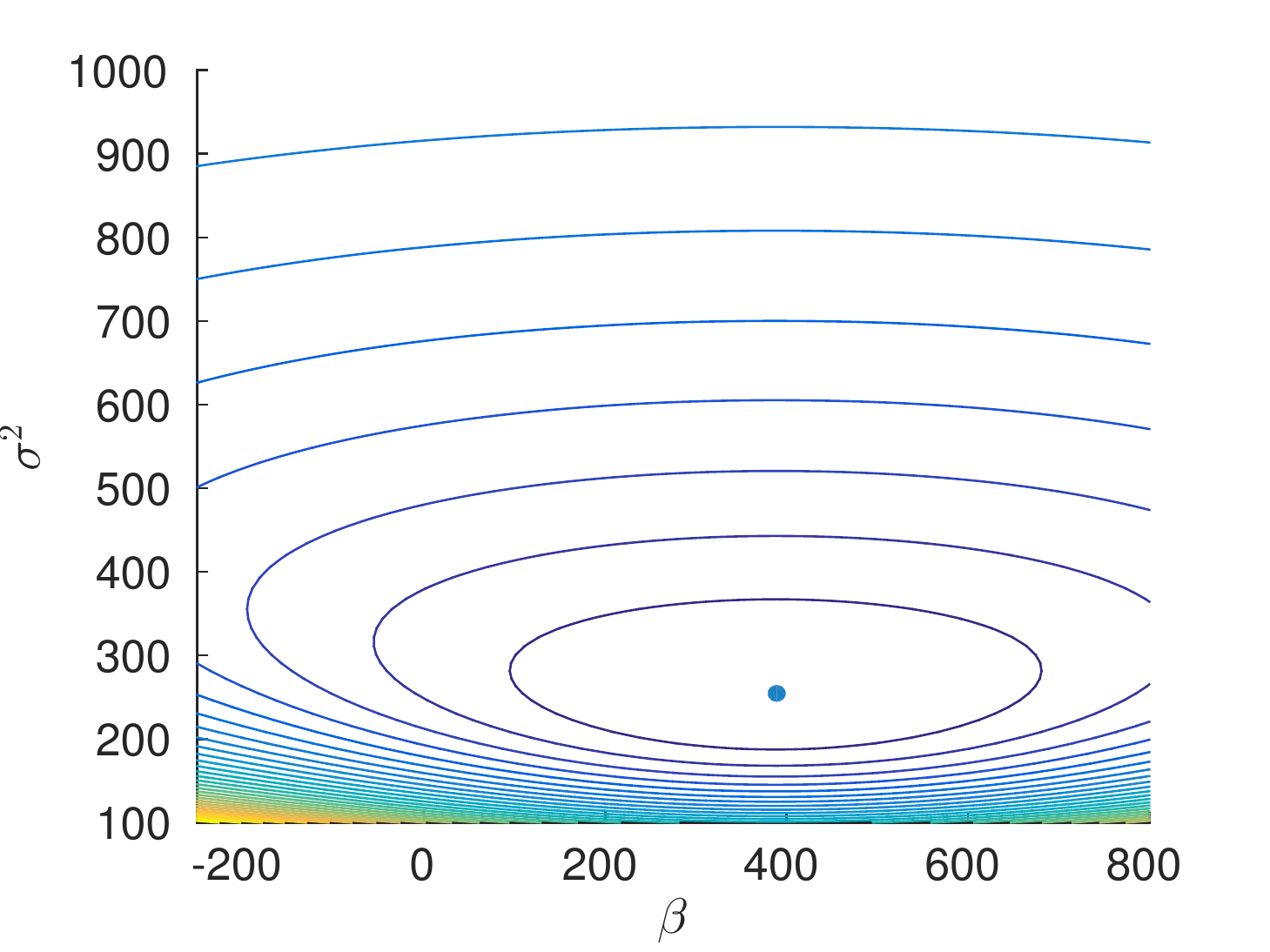}
\label{subfig:dynamics02}}
\caption{In \ref{subfig:dynamics01}, different dynamics of the hyper-parameters
for the log-posterior distribution of test function \ref{subsec:branin} are
shown: {\bf A.} corresponds to positive correlation. {\bf B.} corresponds to an
independent region. {\bf C.} corresponds to negative correlation. In
\ref{subfig:dynamics02}, the marginal log-posterior function of the same example
with $h(x)=1$, presents the same contour level for a wide range of
$\boldsymbol\beta$. Thus, the hyper-parameters exhibit very different scales.
The dot represents the minimum of the corresponding function.}
\label{fig:dynamics}
\end{figure}
\noindent Another alternative is to focus on $\boldsymbol\phi$ and perform the
inference in the correlation function. This is done by regarding
$\boldsymbol\beta$ and $\sigma^2$ as nuisance parameters and integrating them
out from the posterior distribution \eqref{eq:posterior}. The modelling
assumptions in the training runs and the prior distribution, equations
\eqref{eq:normal} and \eqref{eq:prior} respectively, allow to identify a
Gaussian-inverse-gamma distribution for $\boldsymbol\beta$ and $\sigma^2$, which
can be shown to yield the integrated posterior distribution
\begin{align}
p(\boldsymbol\phi | \mathcal{D}) \propto p(\boldsymbol\phi) \, (\hat{\sigma}^2)^{-\frac{n-p}{2}} \,
|K|^{-\frac{1}{2}} \, |H^T K^{-1} H|^{-\frac{1}{2}}, \label{eq:integrated_post}
\end{align}
where 
\begin{align}
\hat{\sigma}^2 &= \frac{\vec{y}^T \, \left( K^{-1} - K^{-1}H(H^T K^{-1} H)^{-1}H^TK^{-1} \right) \,
\vec{y}}{n-p-2},
\end{align}
and
\begin{align}
\hat{\boldsymbol\beta} &= (H^T K^{-1} H)^{-1}H^TK^{-1} \vec{y},
\end{align}
are estimators of the signal noise $\sigma^2$ and regression coefficients
$\boldsymbol\beta$ \cite[see][for further details]{Oakley1999}. Additionally,
the predictive distribution conditioned on the hyper-parameters follows a
Gaussian distribution with mean and correlation functions
\begin{align}
\mu(\vec{x}^*| \boldsymbol\phi) =& \,\, h(\vec{x}^*)^T \hat{\boldsymbol\beta} + t(\vec{x}^*)^T
K^{-1} (\vec{y} - H\hat{\boldsymbol\beta}), \\
\text{corr}(\vec{x}^*,\vec{w}^*|\boldsymbol\phi) =& \,\, k(\vec{x}^*, \vec{w}^* | \boldsymbol\phi) -
t(\vec{x}^*)^T K^{-1} \, t(\vec{w}^*) \, + \nonumber \\ 
& \left(h(\vec{x}^*)^T - t(\vec{x}^*)^T K^{-1} H \right) (H^T K^{-1} H)^{-1} \left(h(\vec{w}^*)^T -
t(\vec{w}^*)^T K^{-1} H \right)^T,
\end{align}
where $\vec{x}^*$, $\vec{w}^*$ denote a pair of test configurations and
$t(\vec{x}^*)$ denotes the vector obtained by computing the covariance of the
new proposal with every design point $t(\vec{x}) =
(k(\vec{x},\vec{x}_1|\boldsymbol\phi), \ldots k(\vec{x}, \vec{x}_n|
\boldsymbol\phi))^T$. Note that both estimators depend only on the correlation
function hyper-parameters $\boldsymbol\phi$ since both $\boldsymbol\beta$ and
$\sigma^2$ have been integrated out. Considerations of when it is appropriate to
integrate out the hyper-parameters in a model has been discussed by
\cite{MacKay1996}. In the Gaussian process context it gains additional
significance since it allows the development of appropriate MCMC samplers
capable of overcoming the dynamics of different sets of hyper-parameters.

In the light of the above discussion, this work focuses on the inference drawn
from the correlation function $k(\cdot,\cdot)$ in equation
\eqref{eq:correlation}, since the structure of dependencies of the training runs
to predict the outputs is recovered by it. The main assumption is that the mean
function hyper-parameter $\boldsymbol\beta$ contains minor information on the
structural dependencies of the data, relative to the correlation function 
hyper-parameters, which would prevent the use of integrated likelihoods
\citep[see][for further discussion]{Berger1999}. If prior information is
available, then an additional effort can be made on eliciting an appropriate
mean function for the Gaussian process emulator.  Such information can be
related to expert knowledge of the simulator which eventually allows  the
analyst to build a better mean function by adding significant regression
covariates \cite[see][for a detailed discussion]{Vernon2010}.

%------------------------------------------------

\section{AIMS Framework} \label{sec:aims}

Hyper-parameter marginalisation by means of Monte Carlo methods in Gaussian
processes is usually performed by Hybrid Monte Carlo methods \citep{Neal1998a,
Williams1996} which are capable of suppressing the Random Walk behaviour of MCMC
samplers if tuned correctly. In this work, the sampling of the hyper-parameters
is done by means of Asymptotically Independent Markov Sampling (AIMS)
\citep{Beck}. This method combines techniques developed for Bayesian inference
such as Importance Sampling and Simulated Annealing \citep{Kirkpatrick1983} to
sample from the posterior distribution as done by other MCMC algorithms.
Additionally, AIMS can also be adapted for global optimisation (AIMS-OPT)
\citep{Zuev2013} in a fashion of the traditional simulated annealing method for
stochastic optimisation. Let the problem be
\begin{align}
\min_{\boldsymbol\phi \in \Phi} \, \mathcal{H}(\boldsymbol\phi | \mathcal{D}), 
\label{eq:objective}
\end{align}
where $\mathcal{H}(\phi | \mathcal{D})$ denotes the negative log-posterior
distribution conditional on the set of training runs $\mathcal{D}$. Let the set
of optimal solutions to the optimisation problem above be
\begin{align}
\Phi^* = \left\lbrace \boldsymbol\phi \in \Phi \,:\, \boldsymbol\phi = \arg \min_{\boldsymbol\phi
\in \Phi} \, \mathcal{H}(\boldsymbol\phi | \mathcal{D} ) \right\rbrace , \label{eq:opt_set}
\end{align}
where $|\Phi^*| \geq 1$. This formulation acknowledges the presence of multiple
global optima in the posterior distribution conditional on the training runs. It
is important to note that using the logarithm of the posterior distribution
reduces the overflow in the computation of the equation
\eqref{eq:integrated_post}, which is likely to arise due to ill-conditioning of
the matrix $K$ \citep{Neal2003a}.

In this context, AIMS-OPT is capable of producing samples by means of a sequence
of nested subsets $\Phi_{k+1} \subseteq \Phi_{k}$ that converges to the set of
optimal solutions $\Phi^*$. Thus, if the algorithm is terminated in a premature
step, a set of sub-optimal approximations to \eqref{eq:opt_set} will be
recovered. Let $\{p_k(\boldsymbol\phi | \mathcal{D})\}_{k = 1}^{\infty}$ be the
sequence of density distributions such that
\begin{align}
p_k(\boldsymbol\phi | \mathcal{D}) &\propto p(\boldsymbol\phi | \mathcal{D})^{1/\tau_k} = \exp \left\lbrace - \mathcal{H}(\boldsymbol\phi | \mathcal{D} )/\tau_k \right\rbrace,
\end{align} 
for a sequence of monotonically decreasing temperatures $\tau_k$. By tempering
the distributions in this manner, the samples obtained in the first step of the
algorithm are approximately distributed as a uniform random variable over a {\em
practical support}; while in the last annealing level, they are distributed
uniformly on the set of optimal solutions, namely
\begin{align}
\lim_{\tau \rightarrow \infty} p_\tau(\boldsymbol\phi| \mathcal{D}) &= U_\Phi(\boldsymbol\phi),
\label{eq:meta_prior}\\
\lim_{\tau \rightarrow 0} p_\tau(\boldsymbol\phi| \mathcal{D}) &= U_{\Phi^*}(\boldsymbol\phi),
\end{align}
where $U_A(\boldsymbol\phi)$ denotes a uniform distribution over the set $A$ for
every $\boldsymbol\phi \in A$.

\subsection{Annealing at level \textit{k}}

The general framework for the AIMS-OPT algorithm is presented, focusing on how
to sample from the hyper-parameter space at level $k$ based on the sample of the
previous level. Let $\boldsymbol\phi_1^{(k-1)}, \ldots,
\boldsymbol\phi_N^{(k-1)}$ be samples of the hyper-parameters distributed as
$p_{k-1}(\boldsymbol\phi)$ at level $k-1$. For notational simplicity, the
conditional on $\mathcal{D}$ will be omitted from $p_{k-1}(\cdot)$, however the
training runs are crucial to build statistical surrogates. The objective is to
use a kernel such that $p_k(\cdot)$ is the stationary distribution of the Markov
chain. Let $\mathcal{P}_k$ denote such Markov transition kernel, which satisfies
the continuous Chapman-Kolmogorov equation
\begin{align}
p_k(\boldsymbol\phi) \, d\boldsymbol\phi = \int_\Phi \, \mathcal{P}_k(d\boldsymbol\phi|
\boldsymbol\xi) \, p_k(\boldsymbol\xi) \, d\boldsymbol\xi, \label{eq:stationarity}
\end{align}
where $ p_k(d\boldsymbol\phi) = p_k(\boldsymbol\phi) \, d\boldsymbol\phi$
denotes the probability measure. By applying importance sampling using the
distribution at the previous annealing level, equation \eqref{eq:stationarity}
can be approximated as
\begin{align}
p_k(\boldsymbol\phi) \, d\boldsymbol\phi &= \int_\Phi \, \mathcal{P}_k(d\boldsymbol\phi|
\boldsymbol\xi) \, \frac{p_k(\boldsymbol\xi)}{p_{k-1}(\boldsymbol\xi)} \, p_{k-1}(\boldsymbol\xi) \,
d\boldsymbol\xi \nonumber \\
&\approx \sum_{j=1}^N \, \mathcal{P}_k(d\boldsymbol\phi| \boldsymbol\phi^{(k-1)}_j) \,
\overline{\omega}_j^{(k-1)} = \hat{p}_{k,N}(d\boldsymbol\phi), \label{eq:approx}
\end{align}
where $\hat{p}_{k,N}(\cdot)$ is used as the {\em global} proposal distribution
for a candidate in the chain and
\begin{align}
\omega^{(k-1)}_j &= \frac{p_{k}\left( \boldsymbol\phi^{(k-1)}_j \right)}{p_{k-1}\left(
\boldsymbol\phi^{(k-1)}_j \right)} \propto \exp \left\lbrace - \mathcal{H}\left( 
\boldsymbol\phi^{(k-1)}_j | \mathcal{D} \right) \left( \frac{1}{\tau_k} -
\frac{1}{\tau_{k-1}}\right)\right\rbrace, \\
\overline{\omega}^{(k-1)}_j &= \frac{\omega^{(k-1)}_j}{\sum_{j=1}^N \omega^{(k-1)}_j},
\end{align}
are the importance weights and the normalised importance weights respectively.
Note that for computing $\overline{\omega}^{(k-1)}_j$ the normalising constant
of the integrated posterior  distribution \eqref{eq:integrated_post} is not
needed.
	
The proposals of candidates for the chain are done in two steps. In the first
step, a candidate is drawn as an update from a random {\em marker} from the
sample of the previous annealing level, checking whether it is accepted or not.
If the local candidate is rejected by a Random Walk Metropolis-Hastings
evaluation, then the chain remains invariant, $\boldsymbol\phi^{(k)}_{i+1} =
\boldsymbol\phi^{(k)}_{i}$, and another marker is selected at random. In the
second step, given the candidate has been accepted as a local proposal, such
candidate is considered as being drawn from the approximation in
\eqref{eq:approx} and accepted in an Independent Metropolis-Hastings framework,
hence called a global candidate for the chain. Let $q_k(\cdot|\cdot)$ denote the
symmetric transition distribution used for local proposals for the Markov chain.
The subscript $k$ accounts for the adaptive nature of the transition steps in
each annealing level. Thus, the kernel distribution of the Random Walk, which
leaves the  intermediate density invariant, can be written as
\begin{align}
\mathcal{P}_k(d\boldsymbol\phi| \boldsymbol\xi) =& \, q_k(\boldsymbol\phi| \boldsymbol\xi) \, \min
\left\lbrace 1, \frac{p_k(\boldsymbol\phi)}{p_k(\boldsymbol\xi)}\right\rbrace d\boldsymbol\phi \, +
\, \left( 1- \alpha_k(\boldsymbol\xi) \right) \, \delta_{\boldsymbol\xi}(d\boldsymbol\phi),
\end{align}
where $\delta_{\boldsymbol\xi}(d\boldsymbol\phi)$ denotes a delta density and
$\alpha_k(\boldsymbol\xi)$ is the probability of accepting the transition from
$\boldsymbol\xi$ to $\Phi\backslash \{\boldsymbol\xi\}$. It follows from
\eqref{eq:approx} that the approximated stationary condition of the target
distribution at annealing level $k$ can be written as
\begin{align}
\hat{p}_{k,N}(\boldsymbol\phi) &= \sum_{j=1}^N \overline{\omega}_j^{(k-1)} \,
q_k\left(\boldsymbol\phi \left\lvert \boldsymbol\phi^{(k-1)}_j \right. \right) \, \alpha_k^{l}\left(
\boldsymbol\phi \left\vert \boldsymbol\phi^{(k-1)}_j \right. \right), 
\label{eq:approx_comp}
\end{align}
with 
\begin{align}
\alpha_k^{l}\left( \boldsymbol\xi \left\vert \boldsymbol\phi \right. \right) = \min \left\lbrace 1,
\frac{p_k(\boldsymbol\xi)}{ p_k\left(\boldsymbol\phi \right)}\right\rbrace,
\end{align}
the probability of accepting the local transition; whereas 
\begin{align}
\alpha_k^{g}\left( \boldsymbol\xi \left\vert \boldsymbol\phi \right. \right) = \min \left\lbrace 1,
\frac{p_k(\boldsymbol\xi) \, \hat{p}_{k,N}(\boldsymbol\phi)}{ p_k\left(\boldsymbol\phi \right) \,
\hat{p}_{k,N}(\boldsymbol\xi) }\right\rbrace , \label{eq:global_accep}
\end{align}
denotes the probability of accepting such candidate for the Markov chain, hence
accepting a global transition \citep[see][for a detailed discussion]{Zuev2013}.
This leads to the following two algorithms for each level in the annealing
sequence.

\begin{algorithm}[H]
\SetKwInOut{Input}{Input}\SetKwInOut{Output}{Output}

\Input{}
\begin{minipage}{.92\linewidth}
\vspace{-1.5mm}
\begin{itemize}[label=$\diamond$]\setlength{\itemsep}{-1mm}
\item $\boldsymbol\phi^{(k-1)}_1, \ldots, \boldsymbol\phi^{(k-1)}_N \sim p_{k-1}(\boldsymbol\phi)$,
generated at previous level;
\item $\boldsymbol\phi_1^{(k)} \in \Phi\backslash \left\lbrace \boldsymbol\phi^{(k-1)}_1, \ldots,
\boldsymbol\phi^{(k-1)}_N\right\rbrace$, initial state of the chain;
\item $q_k(\boldsymbol\phi|\boldsymbol\xi)$, symmetric local proposal;
\end{itemize}
\vspace{-1.5mm}
\end{minipage}

\Output{\\
\vspace{-2.5mm}
\begin{itemize}[label=$\diamond$]\setlength{\itemsep}{-2mm}
\item $\boldsymbol\phi^{(k)}_1, \ldots, \boldsymbol\phi^{(k)}_N \sim p_{k}(\boldsymbol\phi)$;
\end{itemize}
\vspace{-1.5mm}
}

\BlankLine

\For{$i\leftarrow 2$ \KwTo $n-1$}{
\begin{minipage}{.85\linewidth}
\begin{enumerate}[label={(\arabic*)}, leftmargin=*]
\item Generate a local candidate using the previous level samples as ``markers''
\begin{align}
\boldsymbol\xi \sim & \, Q_k\left( \boldsymbol\xi \left\rvert \boldsymbol\phi^{(k-1)}_1, \ldots,
\boldsymbol\phi^{(k-1)}_n \right. \right) \nonumber \\
 & = \sum_{j=1}^N \, \overline{\omega}_j^{(k-1)} \, q_k\left(\boldsymbol\xi\left\lvert
\boldsymbol\phi^{(k-1)}_j\right.\right)
\end{align}
\vspace{-7mm}
\begin{enumerate}
\item Select index $j$ with probability proportional to importance weights $\omega_1^{(k-1)},
\ldots, \omega_N^{(k-1)}$.
\item Generate candidate from the local proposal distribution 
\vspace{-2.5mm}
\begin{align}
\boldsymbol\xi \sim q_k\left( \boldsymbol\xi \left\lvert \boldsymbol\phi^{(k-1)}_j \right. \right)
\;
\end{align}
\vspace{-7.5mm}
\item Accept $\boldsymbol\xi$ as a local candidate with probability 
\vspace{-2.5mm}
\begin{align}
\alpha^{l}_k\left( \boldsymbol\xi \left\vert \boldsymbol\phi^{(k-1)}_j \right. \right) \;
\end{align}
\end{enumerate}

\item Update $\boldsymbol\phi_i^{(k)} \rightarrow \boldsymbol\phi_{i+1}^{(k)}$ by accepting or
rejecting $\boldsymbol\xi$ using Algorithm \ref{alg:global_accep}.
\end{enumerate}

\end{minipage}
}
\caption{AIMS-OPT at annealing level $k$}\label{alg:annealing_level}
\end{algorithm}
 
\newpage  
 
\begin{algorithm}[H]
\eIf{$\boldsymbol\xi$ was accepted as local candidate}{
Accept $\boldsymbol\xi$ as a global transition with probability
\vspace{-2.5mm}
\begin{align}
\alpha^{g}_k\left( \boldsymbol\xi \left\vert \boldsymbol\phi^{(k)}_i \right. \right) \;
\end{align}
\vspace{-5.5mm}
}{
Leave the chain invariant
\vspace{-2.5mm}
\begin{align}
\boldsymbol\phi_{i+1}^{(k)} = \boldsymbol\phi_{i}^{(k)} \;
\end{align}
\vspace{-5.5mm}
}
\caption{Global acceptance of $\boldsymbol\xi$}\label{alg:global_accep}
\end{algorithm}

\medskip
According to Algorithm \ref{alg:annealing_level} the initialising step should
also be provided for the annealing level. In practical implementations it is
suggested that it should be considered to be $\boldsymbol\phi^{(k)}_1 \sim
q_k(\boldsymbol\phi | \boldsymbol\phi_j^{(k-1)}) $ where \linebreak 
\noindent $j = \arg \max_i \, \overline{\omega}_i^{(k-1)}$, \ie the sample with the
largest normalised importance weight.

\subsection{Adaptive proposal distribution and temperature scheduling}

Even though a Random Walk is performed in every local proposal, AIMS-OPT
performs efficient sweeping of the sample space by producing candidates from
neighbourhoods of the markers from the previous annealing level
$\{\boldsymbol\phi_j^{(k-1)}\}_{j=1}^N$. This is accomplished if the transition
distribution $q_k (\boldsymbol\phi | \boldsymbol\phi^{(k-1)}_j)$ uses an
appropriate proposal distribution where sampling is to be realised; namely, the
level curves of the tempered distribution. To be able to cope with the 
non-negative restriction and to neglect the effect of the scales on each
dimension, the transitions are performed in the log-space of the length-scale
parameters $\boldsymbol\phi$, as suggested by \cite{Neal1997}. The symmetric
transition distribution proposed is a Gaussian distribution for such 
log-parameters. That is, each local candidate will be distributed as

\begin{align}
\boldsymbol\xi \sim \mathcal{N}\left( \boldsymbol\xi \left\lvert \boldsymbol\phi_j^{(k-1)} , c_k
\Sigma_k \right. \right), \label{eq:gauss_local}
\end{align}
where $c_k$ is a decaying parameter for the spread of the proposal, \ie
$c_k = \nu \, c_{k-1}$ with $\nu \in (0, 1)$ commonly chosen as $\nu=1/2$
\citep{Zuev2013}. The matrix $\Sigma_k$ denotes the covariance matrix for 
log-parameters where typical choices can be the identity matrix $I_{p\times p}$, a
diagonal matrix or a symmetric positive definite matrix. We propose the use of
the weighted covariance matrix estimated from the sample and their importance
weights of the previous level $( \overline{\omega}_1^{(k-1)},
\boldsymbol\phi_1^{(k-1)}) , \ldots, ( \overline{\omega}_N^{(k-1)},
\boldsymbol\phi_N^{(k-1)})$. By doing so, the scale and directions of the
ellipsoids of the Gaussian steps are learned as in Adaptive Sequential Monte
Carlo methods \citep{Haario2001, Fearnhead2013} from the information gathered
from the previous level in the sequence.

The annealing sequence and its effective exploration of the sample space is
dictated by the temperature $\tau_k$ of the intermediate distributions.
Moreover, it defines how different is one target distribution from the next one,
so the effectiveness of the sample as markers from the previous annealing level
depends strongly on how the scheduling is performed. It is clear that abrupt
changes lead to rapid deterioration of the sample, whilst low paced changes
could produce unnecessary steps in the annealing schedule. In order to cope with
this compromise, \cite{Zuev2013} used the {\em effective sampling size} to
determine the value of the next temperature in the process. That is solving for
$\tau_{k}$, when a sample from level $k-1$ has been produced, in
\begin{align}
\frac{\sum_{j=1}^n \exp \left\lbrace -2 \mathcal{H} (\boldsymbol\phi_j^{(k-1)} ) \, \left(
\frac{1}{\tau_k} - \frac{1}{\tau_{k-1}}\right)  \right\rbrace }{\left( \sum_{j=1}^n \exp
\left\lbrace - \mathcal{H} (\boldsymbol\phi_j^{(k-1)} ) \, \left( \frac{1}{\tau_k} -
\frac{1}{\tau_{k-1}}\right)  \right\rbrace\right)^2} = \frac{1}{\gamma n},
\end{align}
where $\gamma$ defines a threshold for the proportion of the sample to be as
effective from the importance sampling. Note that the value of $\gamma$ defines
additionally how many annealing steps will be performed. As suggested from
\cite{Zuev2013} a value of 1/2 is used for such parameter.

\subsection{Stopping condition}

If the temperature continues to drop along the sequence of intermediate
distributions, eventually an {\em absolute zero} $\tau_k = 0 $ would be reached.
However, such limit cannot be achieved in practical implementations and a
stopping condition is needed for the algorithm. By the same assumptions as in
the original paper \citep{Zuev2013} and without loss of generality, the
objective function $\mathcal{H}(\boldsymbol\phi)$ is assumed to be non-negative.
Similarly, let $\delta_k$ denote the Coefficient of Variation (COV) of the
sample $\mathcal{H}(\boldsymbol\phi_1^{(k)}), \ldots,
\mathcal{H}(\boldsymbol\phi_N^{(k)})$, \ie
\begin{align}
\delta_k = \frac{\sqrt{\frac{1}{N} \sum_{i = 1}^N \left( \mathcal{H}\left(\boldsymbol\phi_i^{(k)}
\right) - \frac{1}{N} \sum_{j = 1}^N \mathcal{H} \left(\boldsymbol\phi_j^{(k)} \right) \right)^2
}}{ \frac{1}{N} \sum_{j = 1}^N \mathcal{H} \left(\boldsymbol\phi_j^{(k)} \right)}. 
\end{align}
Therefore, $\delta_k$ is used as a measure of the sensitivity of the objective
function to the hyper-parameters in the domain $\Phi^*_{\tau_k}$. If the samples
are all located in $\Phi^*$ then their COV will be zero, since $\forall \, j \,$
$\mathcal{H}(\boldsymbol\phi_j^{(k)}) = \min_{\boldsymbol\phi \in \Phi^*} \,
\mathcal{H}(\boldsymbol\phi)$. As the progression of the intermediate
distributions advances with $k$, it is expected that $\delta_k \rightarrow 0$.
As a consequence, a criteria to stop the annealing sequence is needed, and the
algorithm will stop when the following condition is attained
\begin{align}
\delta_k < \alpha \, \delta_0 = \delta_{\text{target}},\label{eq:stopping_rule}
\end{align}
where $\alpha$ is assumed to be 0.10 in practical implementations to prevent the
algorithm to generate redundant annealing levels in the last steps of the
procedure. \added{Note that the stopping criterion \eqref{eq:stopping_rule} is
used to drive the simulated annealing temperature towards the absolute zero.
However, if the aim is not localising modes as in stochastic optimisation, and a
more traditional oriented sampling is required, the algorithm could be truncated
in a temperature value of 1. This adds an additional layer of flexibility to the
algorithm which other stochastic-search approaches do not share.}

\subsection{Parallel implementation and guarding against rejection} \label{sec:parallel_delayed}

As found in our earliest experiments, AIMS-OPT with the global acceptance rule
as in Algorithm \ref{alg:global_accep} might degenerate quickly in higher
dimensions since the starting of the chain comes from the highest normalised
weighted sample and a transition might take too long to be performed, resulting
in high rejection rates. Furthermore, information from the markers is lost since
they do not provide good transition neighbourhoods and the ability to create new
samples for the next annealing level is maimed. This aside, AIMS-OPT can become
computationally expensive when the number of samples increases. To cope with
these limitations we propose to incorporate the Transitional Markov Chain Monte
Carlo (TMCMC) and the Delayed Rejection methods into the AIMS-OPT framework.
This extension not only enhances the mixing properties of the sampler, \ie
improve acceptance rates, but also  provides a computational framework in which
parallel Markov chains can be sampled from the intermediate distributions
$p_k(\boldsymbol\phi)$ of the length-scale hyper-parameters.

The idea to enable parallelisation comes from the TMCMC algorithm
\citep[see][for further details]{Ching2007}. In the framework of Algorithm
\ref{alg:annealing_level}, every marker from the annealing level $k-1$ is a
starting point for a Markov chain. This produces not only specialised chains
which are likely to explore the marker's neighbourhood on the sample space, but
also allows an assessment of which markers will generate a better chain. The
normalised weights $\overline{\omega}_j^{(k)}$ will dictate how deeply a chain
will evolve starting from its marker $\boldsymbol\phi_j^{(k-1)}$. Consequently,
the number of samples in each chain will be set with probability proportional to
the normalised weight, a direct result from the TMCMC algorithm.

In order to guard against high rejection rates, and therefore degeneracy on the
sampling scheme, we propose to generate an additional candidate if the first one
is rejected as in Delayed Rejection Algorithms \citep{Mira2001a}. Let
$S_1(\cdot| \cdot)$, $S_2(\cdot | \cdot, \cdot)$ be a one step and two steps
proposal density distributions respectively; $\pi(\cdot)$ the target
distribution of the Markov chain and $a_1(\cdot, \cdot)$ the probability of
accepting a transition in one step.  Then, the probability of accepting a
transition in two steps, denoted by $a_2(\cdot, \cdot)$, is
\begin{align}
a_2(\boldsymbol\phi_0, \boldsymbol\phi_2) = \min \left\lbrace 1, \frac{\pi(\boldsymbol\phi_2) \,
S_1( \boldsymbol\phi_1 | \boldsymbol\phi_2) \, S_2( \boldsymbol\phi_0 | \boldsymbol\phi_2,
\boldsymbol\phi_1) \, (1-a_1(\boldsymbol\phi_2, \boldsymbol\phi_1)) }{ \pi(\boldsymbol\phi_0) \,
S_1( \boldsymbol\phi_1 | \boldsymbol\phi_0) \, S_2( \boldsymbol\phi_2 | \boldsymbol\phi_0,
\boldsymbol\phi_1) \, (1-a_1(\boldsymbol\phi_0, \boldsymbol\phi_1)) } \right\rbrace,
\end{align}
where $\boldsymbol\phi_0$ denotes the starting point, $\boldsymbol\phi_1$ the
rejected candidate and $\boldsymbol\phi_2$ the second stage candidate. In our
context, the target distribution $\pi(\cdot)$ is each annealing level
$p_k(\cdot)$ density distribution, the one step proposal distribution $S_1$ is
the independent approximation in equation \eqref{eq:approx_comp} and the 
one-step acceptance probability is the global acceptance probability in
\eqref{eq:global_accep}. The two-step proposal density $S_2$ can be chosen from
several alternatives. In this work we use a symmetric distribution centred at
the starting point $\boldsymbol\phi_0$, since it can be seen as a back-guard
against $S_1$ being a deficient independent sampler \citep[see][for a detailed
discussion]{Zuev2011}. Therefore, the previous equation can be rewritten in
compact form as
\begin{align}
\alpha_{k,2}(\boldsymbol\phi_0, \boldsymbol\phi_2) = \min \left\lbrace 1,
\frac{p_k(\boldsymbol\phi_2) \, (1 - \alpha_k^g(\boldsymbol\phi_1 | \boldsymbol\phi_2) )
}{p_k(\boldsymbol\phi_0) \, (1 - \alpha_k^g (\boldsymbol\phi_1 | \boldsymbol\phi_0 ) ) }
\right\rbrace, \label{eq:twostepaccep}
\end{align}
where $\alpha_k^g(\cdot|\cdot)$ is defined as in equation
\eqref{eq:global_accep}. The fact that $S_2$ is a symmetric distribution centred
in the starting point $\boldsymbol\phi_0$ has been used, \ie
$S_2(\boldsymbol\phi_2 | \boldsymbol\phi_0, \boldsymbol\phi_1 ) =
g(\boldsymbol\phi_2 | \boldsymbol\phi_0) = g(\boldsymbol\phi_0 |
\boldsymbol\phi_2) = S_2(\boldsymbol\phi_0 | \boldsymbol\phi_2,
\boldsymbol\phi_1 )$, where $g(\cdot|\cdot)$ denotes such symmetric proposal
density. By performing the second stage proposal,  the stationary condition of
$p_k(\cdot)$ is maintained as stated in the following proposition.
\begin{proposition}
AIMS-OPT coupled with delayed rejection in two stages leaves the target
distribution $p_k(\cdot)$ invariant at each annealing level.
\end{proposition}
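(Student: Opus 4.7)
The plan is to establish invariance by verifying detailed balance separately for the two stages of the delayed rejection kernel, following the strategy of \cite{Mira2001a} and adapted to the AIMS-OPT setting as in \cite{Zuev2011}. Write the full two-stage kernel as
\begin{align*}
\mathcal{P}_k^{DR}(d\boldsymbol\phi_* | \boldsymbol\phi_0) \;=\; \mathcal{P}_k^{(1)}(d\boldsymbol\phi_* | \boldsymbol\phi_0) \,+\, \mathcal{P}_k^{(2)}(d\boldsymbol\phi_* | \boldsymbol\phi_0) \,+\, r_k(\boldsymbol\phi_0)\,\delta_{\boldsymbol\phi_0}(d\boldsymbol\phi_*),
\end{align*}
where $\mathcal{P}_k^{(1)}$ corresponds to acceptance at the first stage with $S_1$ and $\alpha_k^g$, $\mathcal{P}_k^{(2)}$ corresponds to acceptance at the second stage with $S_2$ and $\alpha_{k,2}$, and $r_k(\boldsymbol\phi_0)$ is the probability of rejection in both stages. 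It suffices to show detailed balance for each of $\mathcal{P}_k^{(1)}$ and $\mathcal{P}_k^{(2)}$ separately; the delta component trivially balances.

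First I would handle the first stage. Since $\alpha_k^g$ is the standard Independent Metropolis--Hastings acceptance probability of \eqref{eq:global_accep} with proposal $\hat{p}_{k,N}$, the argument is the classical one: $p_k(\boldsymbol\phi_0) \hat{p}_{k,N}(\boldsymbol\phi_1) \alpha_k^g(\boldsymbol\phi_1|\boldsymbol\phi_0) = p_k(\boldsymbol\phi_1) \hat{p}_{k,N}(\boldsymbol\phi_0) \alpha_k^g(\boldsymbol\phi_0|\boldsymbol\phi_1)$ follows directly from the $\min\{1,\cdot\}$ structure, so $\mathcal{P}_k^{(1)}$ is $p_k$-reversible.

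Next I would verify detailed balance for $\mathcal{P}_k^{(2)}$. The candidate density of a $\boldsymbol\phi_0 \to \boldsymbol\phi_2$ transition via the second stage integrates over the intermediate rejected proposal $\boldsymbol\phi_1$, giving a contribution of the form
\begin{align*}
p_k(\boldsymbol\phi_0)\,S_1(\boldsymbol\phi_1|\boldsymbol\phi_0)\,(1-\alpha_k^g(\boldsymbol\phi_1|\boldsymbol\phi_0))\,g(\boldsymbol\phi_2|\boldsymbol\phi_0)\,\alpha_{k,2}(\boldsymbol\phi_0,\boldsymbol\phi_2),
\end{align*}
where $g(\cdot|\cdot)$ is the symmetric second-stage proposal. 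Substituting the definition \eqref{eq:twostepaccep} of $\alpha_{k,2}$ and using the symmetry $g(\boldsymbol\phi_2|\boldsymbol\phi_0)=g(\boldsymbol\phi_0|\boldsymbol\phi_2)$, the $\min\{1,\cdot\}$ identity yields equality with the reverse expression after interchanging $\boldsymbol\phi_0$ and $\boldsymbol\phi_2$ (integrating over the same $\boldsymbol\phi_1$). Detailed balance for $\mathcal{P}_k^{(2)}$ at the level of densities follows, and integrating \eqref{eq:stationarity} against $p_k$ confirms invariance of the full kernel.

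The main obstacle, and where care is required, is in showing that the factors $(1-\alpha_k^g(\boldsymbol\phi_1|\boldsymbol\phi_0))$ and $(1-\alpha_k^g(\boldsymbol\phi_1|\boldsymbol\phi_2))$ appearing in $\alpha_{k,2}$ are precisely what is needed so that both sides of the detailed balance identity share a common, manifestly symmetric factor once $g$ and the $\min$ structure are exploited. This is the standard Tierney--Mira construction, and the verification reduces to algebraic manipulation once the correct pairing of terms is identified; no further appeal to the approximation $\hat{p}_{k,N}$ is needed for the second stage beyond its appearance inside $\alpha_k^g$.
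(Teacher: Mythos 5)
Your proposal is correct and takes essentially the same route as the paper's appendix: invariance via detailed balance, with the first stage dispatched by the standard independent Metropolis--Hastings argument (the paper cites \cite{Zuev2013} for this) and the second stage closed by the identity $a\,\min\{1,b/a\} = b\,\min\{1,a/b\}$ applied to the Tierney--Mira acceptance probability, keeping the rejected intermediate $\boldsymbol\phi_1$ paired across the forward and reverse paths. One small correction to your closing remark: the approximation $\hat{p}_{k,N}$ enters the second-stage balance not only inside $\alpha_k^g$ but also as the density generating the rejected candidate, and the proof needs $S_1(\boldsymbol\phi_1|\boldsymbol\phi_0) = S_1(\boldsymbol\phi_1|\boldsymbol\phi_2) = \hat{p}_{k,N}(\boldsymbol\phi_1)$ --- i.e.\ the state-independence of the first-stage proposal --- both for this factor to cancel across the swap $\boldsymbol\phi_0 \leftrightarrow \boldsymbol\phi_2$ and to justify dropping the $S_1$-ratio from the general delayed-rejection acceptance probability when passing to the compact form \eqref{eq:twostepaccep}, a point the paper makes explicit.
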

\begin{proof}		
See \ref{appendix} for a proof using a general transition distribution
$S_2(\cdot | \cdot, \cdot)$.
\end{proof}

From the above discussion, the proposed scheme provides a fail-safe against any
possible mismatch of the approximation done with \eqref{eq:approx_comp}.
Additionally, the results presented in this paper correspond to the second step
candidate being a Gaussian random variable, $\boldsymbol\xi \sim
\mathcal{N}(\boldsymbol\phi_i^{(k)} | c_0 \Sigma_k)$. The ideas to accept a
global transition after having accepted a local proposition can be summarised in
Algorithm \ref{alg:global_accep_delayed}.

\begin{algorithm}[H]
\eIf{$\boldsymbol\xi$ was accepted as local candidate}{
Accept $\boldsymbol\xi$ as a global transition with probability
\vspace{-2.5mm}
\begin{align}
\alpha^{g}_k\left( \boldsymbol\xi \left\vert \boldsymbol\phi^{(k)}_i \right. \right) \;
\end{align}
\vspace{-5.5mm}
}{
Generate a second candidate $\boldsymbol\xi_2$ from 
\begin{align}
\boldsymbol\xi_2 \sim \mathcal{N}(\boldsymbol\phi_i^{(k)} | c_0 \Sigma_k)
\end{align}
\eIf{$\boldsymbol\xi_2$ is accepted with probability $\alpha_{k,2}(\boldsymbol\phi_{i}^{(k)},
\boldsymbol\xi_2) $ computed as in equation \eqref{eq:twostepaccep} }{
\vspace{-2.5mm}
\begin{align}
\boldsymbol\phi_{i+1}^{(k)} = \boldsymbol\xi_2 \;
\end{align}
\vspace{-5.5mm}
}{
\vspace{-2.5mm}
\begin{align}
\boldsymbol\phi_{i+1}^{(k)} = \boldsymbol\phi_{i}^{(k)} \;
\end{align}
\vspace{-5.5mm} }
}
\caption{Global acceptance using delayed rejection}\label{alg:global_accep_delayed}
\end{algorithm}

%------------------------------------------------

\section{Implementation Aspects} \label{sec:implementation}

The computational complexity of the posterior distribution in equation
\eqref{eq:integrated_post} is governed by the inverse of the covariance matrix
$K$ as it scales with the number of training runs $N$. Several solutions have
been developed in the literature, such as computation of inverse products of the
form $K^{-1} u$, with $u \in \mathbb{R}^N$, by means of Cholesky factors or
Spectral Decomposition \citep[see][for efficient implementations]{Golub1996} to
preserve numerical stability in the matrix operations \citep[see][]{Gibbs1998}.
Nonetheless, numerical stability is not likely to be achieved if the training
runs are very limited, or if the sampling scheme for such training runs cannot
lead to stable covariance matrices, as depicted in Figure \ref{fig:numstab}.

\begin{figure}[H]
\centering 
\subfloat[Numerically unstable surface]{\includegraphics[draft=false,width=.32\linewidth]{./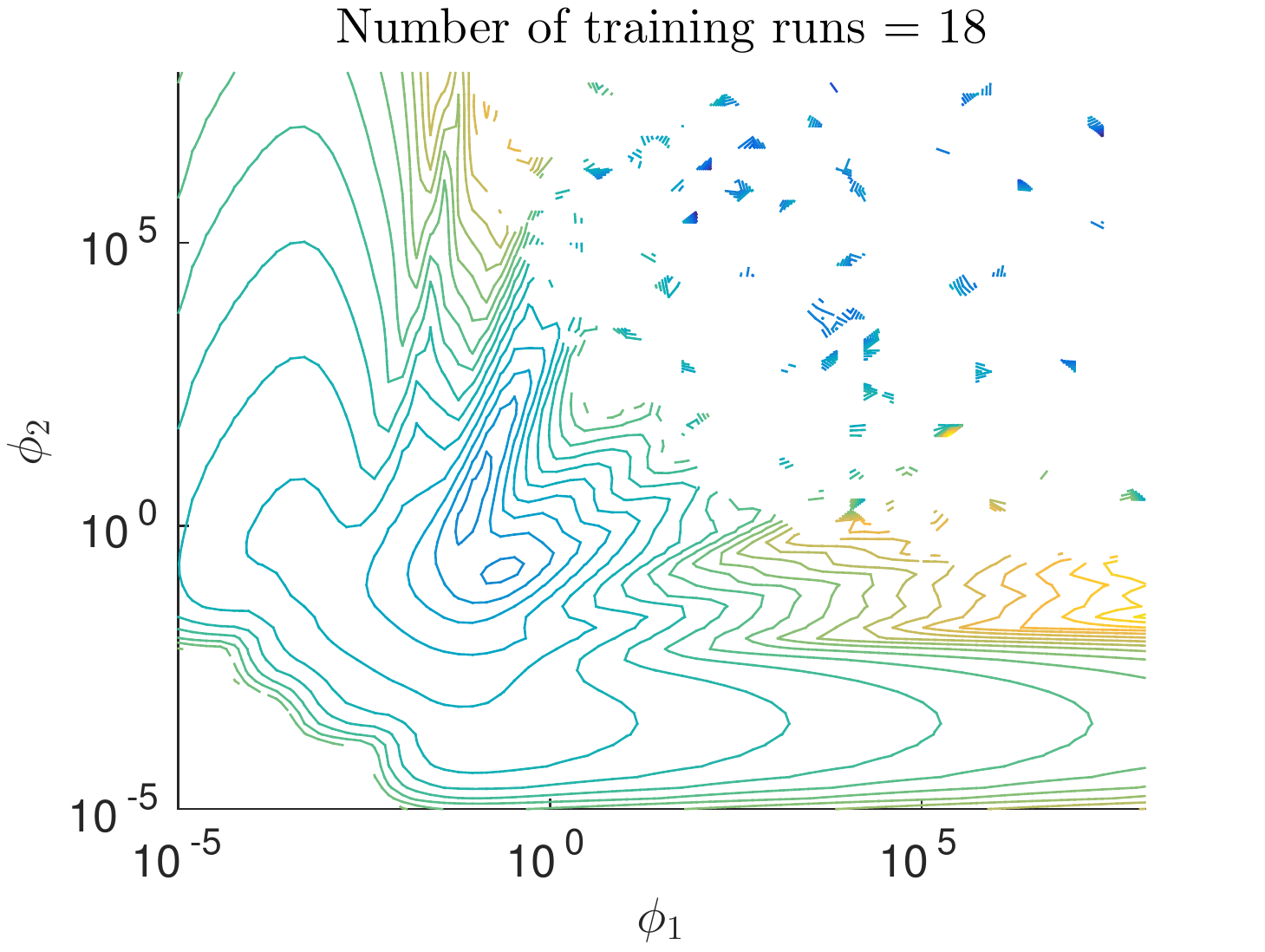}\label{subfig:numstab01}}
\qquad
\subfloat[Numerically stable surface]{\includegraphics[draft=false, width=.32\linewidth]{./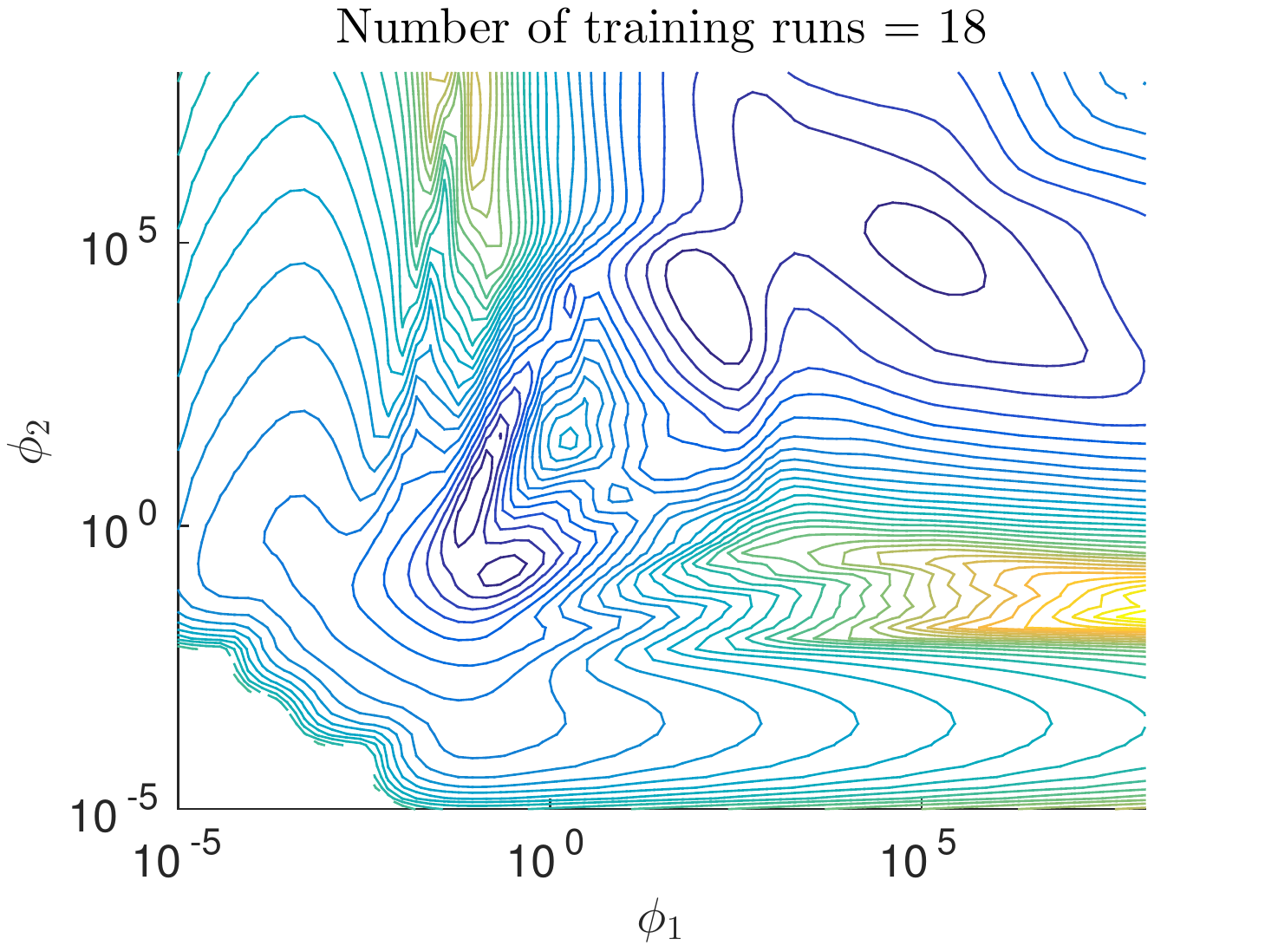}\label{subfig:numstab02}}
\caption{Projection of the negative log-posterior curves in the two dimensional length-scale space. Adding
the nugget $\phi_{\delta}$ results in a numerically stable surface.}
\label{fig:numstab}
\end{figure}
	
To overcome this practical deficiency, a correction term in the covariance
matrix can be added in order to preserve diagonal dominancy, that is, we add a
{\em nugget} hyper-parameter $\boldsymbol\phi_\delta$ to the covariance such
that
\begin{align}
K_\delta = K + \boldsymbol\phi_\delta \, I, \label{eq:nugget_cov}
\end{align}
is positive definite. Doing so results in the stochastic simulator 
\begin{align}
y_i = \eta(\vec{x}_i) + \sigma^2 \, \boldsymbol\phi_\delta.
\end{align}

Note that the interpolating quality of the Gaussian process is lost, however,
the term $\sigma^2 \, \boldsymbol\phi_\delta$ accounts for the variability of
the simulator that cannot be explained by the emulator given the original
assumptions (adequacy of the covariance function, for example). The nugget can
also provide further quantification of model uncertainty in the inference
process as it provides an alternative to smoothing an already complex surface.
As it is also noticed by \cite{Andrianakis2012} and \cite{Ranjan2011}, the
quality of the emulator changes with the inclusion of the nugget, since it
modifies the objective function itself by introducing new modes in the landscape
of the posterior distribution. \replaced{The configuration reflected by new
modes in these cases might correspond to emulators with no local dependencies
and an overall simple trend, defined from the basis functions and regression
hyper-parameters $\boldsymbol\beta$. Therefore, if a Gaussian process with no
local dependencies, \eg with its mode farther away from the origin in the
length-scale space,}{ Note that if such modified landscape} is assessed as not
appropriate for the model, a regularisation term can be added in the
optimisation formulation \added{as in} \citep{Andrianakis2012}. \added{This
corresponds to precautions for the inclusion of the nugget and can be seen as
elicited prior beliefs on the Bayesian formulation}. However, by using a 
multi-modal sampler for stochastic optimisation as the one proposed, a robust
emulator capable of mixing various possibilities can be provided. This results
in an emulator that is able to cope with violations to the  modelling
assumptions originated by working with a limited amount of training runs.

We incorporate the nugget term $\boldsymbol\phi_\delta$ as a hyper-parameter of
the correlation function in the Bayesian inference process. As suggested by
\cite{Ranjan2011} a uniform prior distribution $U(10^{-12}, 1)$ for such
parameter is considered. The effect of the bounds is twofold. First, the lower
bound is used to guarantee stability in the covariance matrix. Second, the upper
bound is used to force the numerical noise of the simulator to be smaller than
the signal noise of the emulator itself. Note that this last assumption can be
omitted if the problem requires it. By considering the correlation matrix as in
equation \eqref{eq:nugget_cov}, this yields
\begin{align}
\Sigma_\delta &= \sigma^2 \, K_\delta ,
\end{align}
where $K_\delta$ denotes the corrected correlation matrix and $\Sigma_\delta$
has been used to denote the covariance matrix of the Gaussian process. By doing
so it is clear that previous considerations regarding $\sigma^2$, such as the
ability of marginalising it as a nuisance parameter and the use of a 
non-informative prior remain unchanged \citep{DeOliveira2007}.

%------------------------------------------------

\section{Numerical Experiments} \label{sec:numerical}

To illustrate the robustness of estimating the hyper-parameters of a Gaussian
process using the parallel AIMS-OPT framework, three test cases have been
selected. The first two are common examples that can be found in the literature.
The first is known as the Branin function and has been modified to resemble
usual properties of engineering applications \citep[]{Forrester2008}. The second
one \citep{Bastos2009} has been used as a two dimensional function with a
challenging complexity for emulating purposes. The third example presented in
this section comes from a real dataset also presented in \cite{Bastos2009}. In
all the examples it is assumed that $h(\vec{x}) = (1, x_1, \ldots, x_p)^T$.
Regarding the nugget, a sigmoid transformation has been performed in order to
sample from a Gaussian distribution. Namely, we sample an auxiliary $z_\delta$
as part of the multivariate Gaussian in  \eqref{eq:gauss_local}, and compute the
nugget as
\begin{align}
\theta_\delta = \frac{1-l_b}{1+\exp(-z_\delta)} + l_b,
\end{align}
where $l_b$ is the lower bound for the nugget, which is set equal to $10^{-12}$.
Additionally, the uniform meta-prior distribution of equation
\eqref{eq:meta_prior} has been considered in a practical support of the 
length-scale parameters in the logarithmic space, namely a uniform distribution
with support in $[-7,7]$. For the nugget, a truncated beta distribution with
parameters $\alpha = \beta = 0.5$ has been considered since it corresponds to a
non informative meta-prior in the interval $[l_b, \, 1]$. Here the prefix {\em
meta} has been used to refer to the algorithm's prior distribution and to set a
clear distinction from the prior used in the modelling assumptions in equation
\eqref{eq:prior}.

The code has been implemented in MATLAB and all examples have been run in a
GNU/Linux machine with an Intel i5 processor with 8 Gb of RAM. For the purpose
of reproducibility, the code used to generate the examples \deleted{in this
paper} is available for download at
\url{https://github.com/agarbuno/paims_codes}.

\subsection{Branin Function} \label{subsec:branin}

The version of the Branin function used in this paper is a modification made by
\cite{Forrester2008} for the purpose of Kriging prediction in engineering
applications. It is a rescaled version of the original in order to bound the
inputs to the rectangle $[0,1]\times [0,1]$, with an additional term that
modifies its landscape to include a global optimum. Namely,
\begin{align}
f({\bf x}) = \left( \overline{x}_2 - \frac{5.1}{4\pi^2} \overline{x}_1^2 + \frac{5}{\pi}
\overline{x}_1 - 6 	\right)^2 + 10 \left[ \left(1- \frac{1}{8\pi}	\right)
\cos(\overline{x}_1)+ 1	\right] + 5 \overline{x}_1 ,
\end{align}
where $\overline{x}_1 = 15\,x_1 -5$ and $\overline{x}_2 = 15\, x_2$.

For this case, a sample of 18 design points were chosen with a Latin hypercube
sampling scheme. The resulting log-posterior function possesses 4 different
modes in its landscape (see Figure \ref{subfig:branin}) leading to 4 possible
configurations of the correlation function. Thus, the impact of the training
runs used to construct the emulator is evident. Among these modes, 4 different
types of emulators can be distinguished: an emulator with high sensitivity to
changes in input $x_1$ (mode A in Figure \ref{subfig:branin}); an emulator with
rapid changes in $x_2$ for the correlation structure of the training runs (mode
B); a limiting case where dimension $x_2$ is disregarded in the correlation
function, due to a high value in $\boldsymbol\phi_2$ (mode C); or a second
limiting emulator which approximates a Bayesian linear regression model (mode D)
\citep[see][for a detailed discussion]{Andrianakis2012}.
\begin{figure}[H]
\centering 
\subfloat[Level curves and modes]{\includegraphics[draft=false,width=.32\linewidth]{./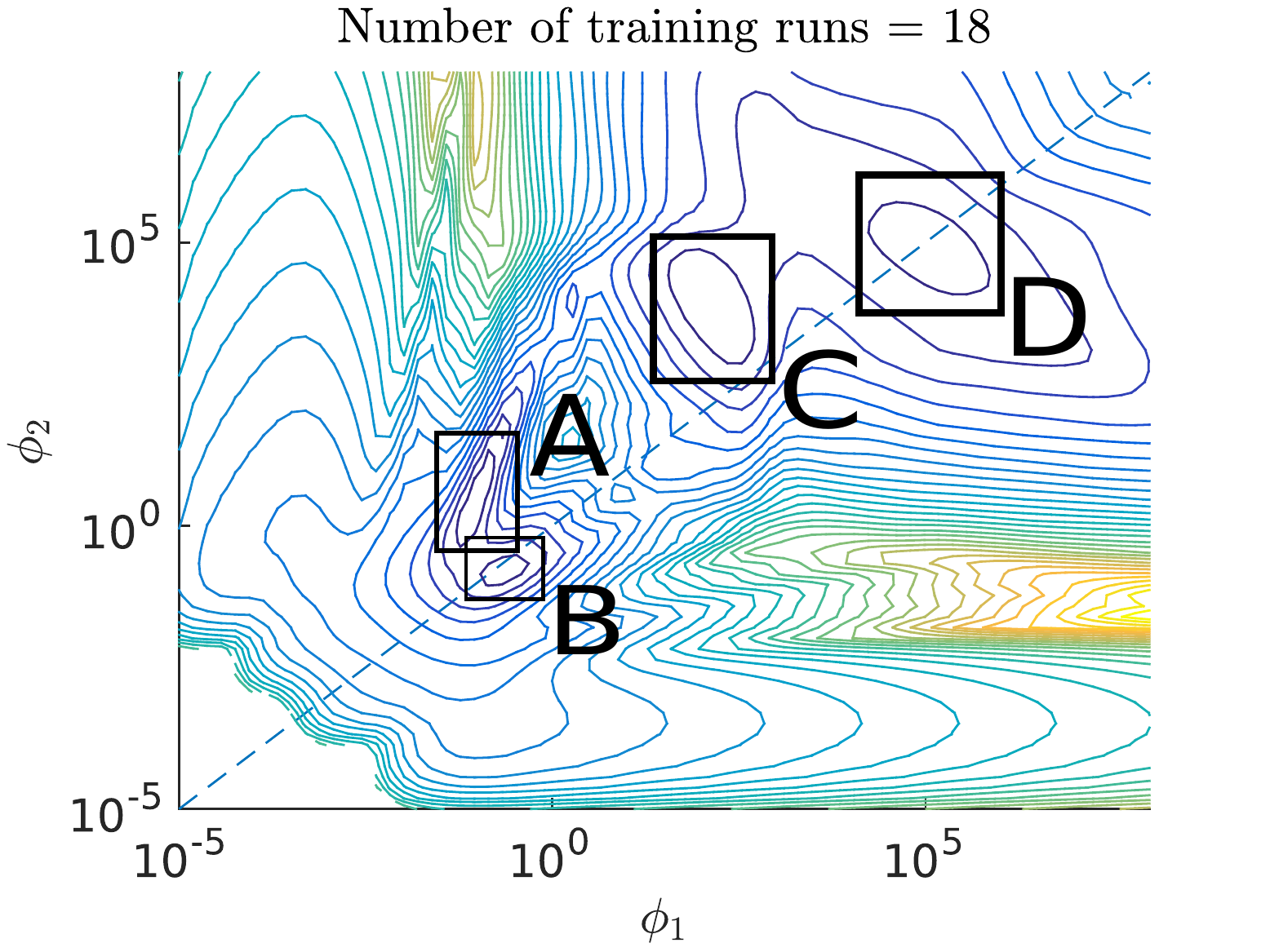}\label{subfig:branin}}
\subfloat[Parallel AIMS-OPT sample]{\includegraphics[draft=false,width=.32\linewidth]{./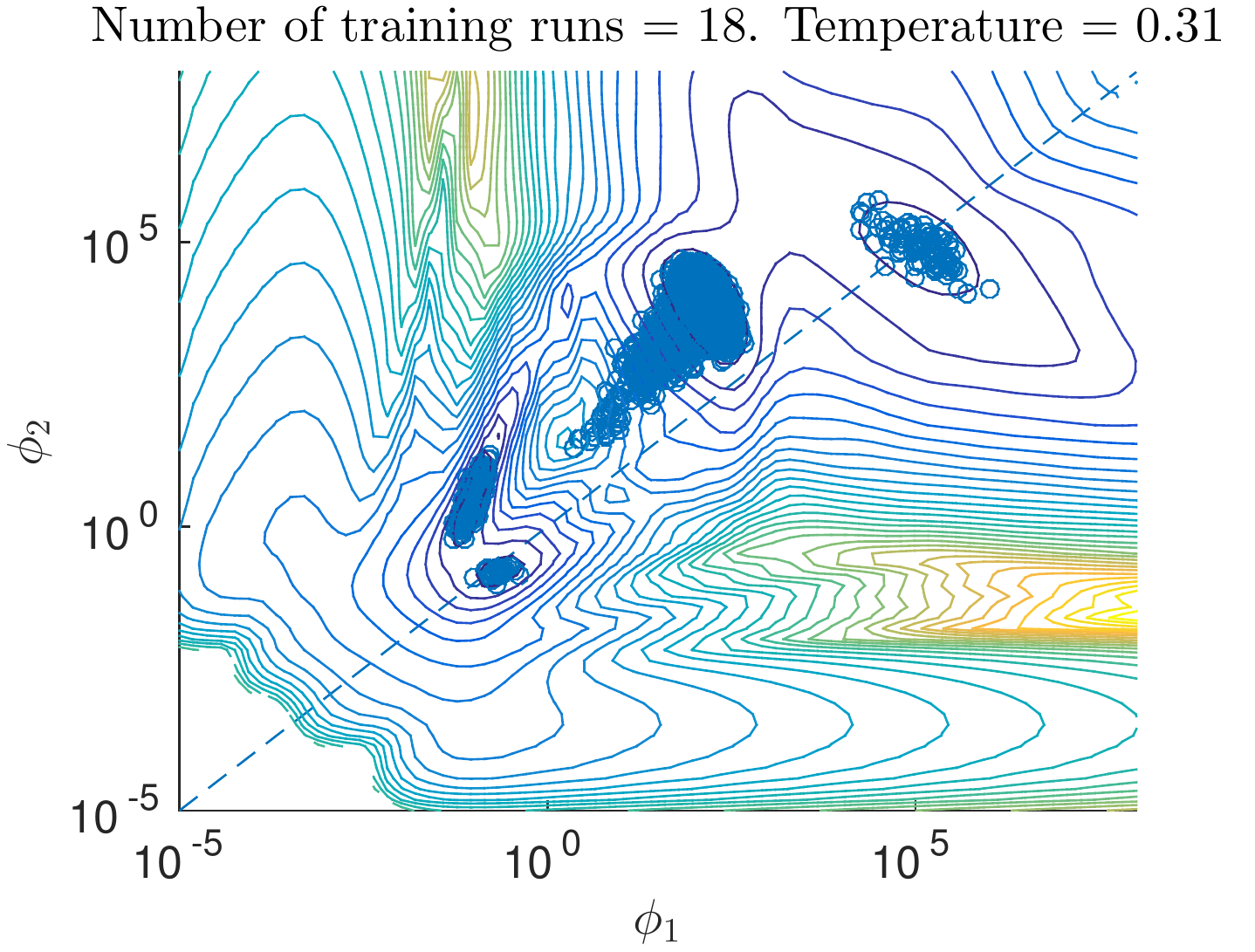}\label{subfig:branin_samples}}
\subfloat[Residual plots]{\includegraphics[draft=false,width=.32\linewidth]{./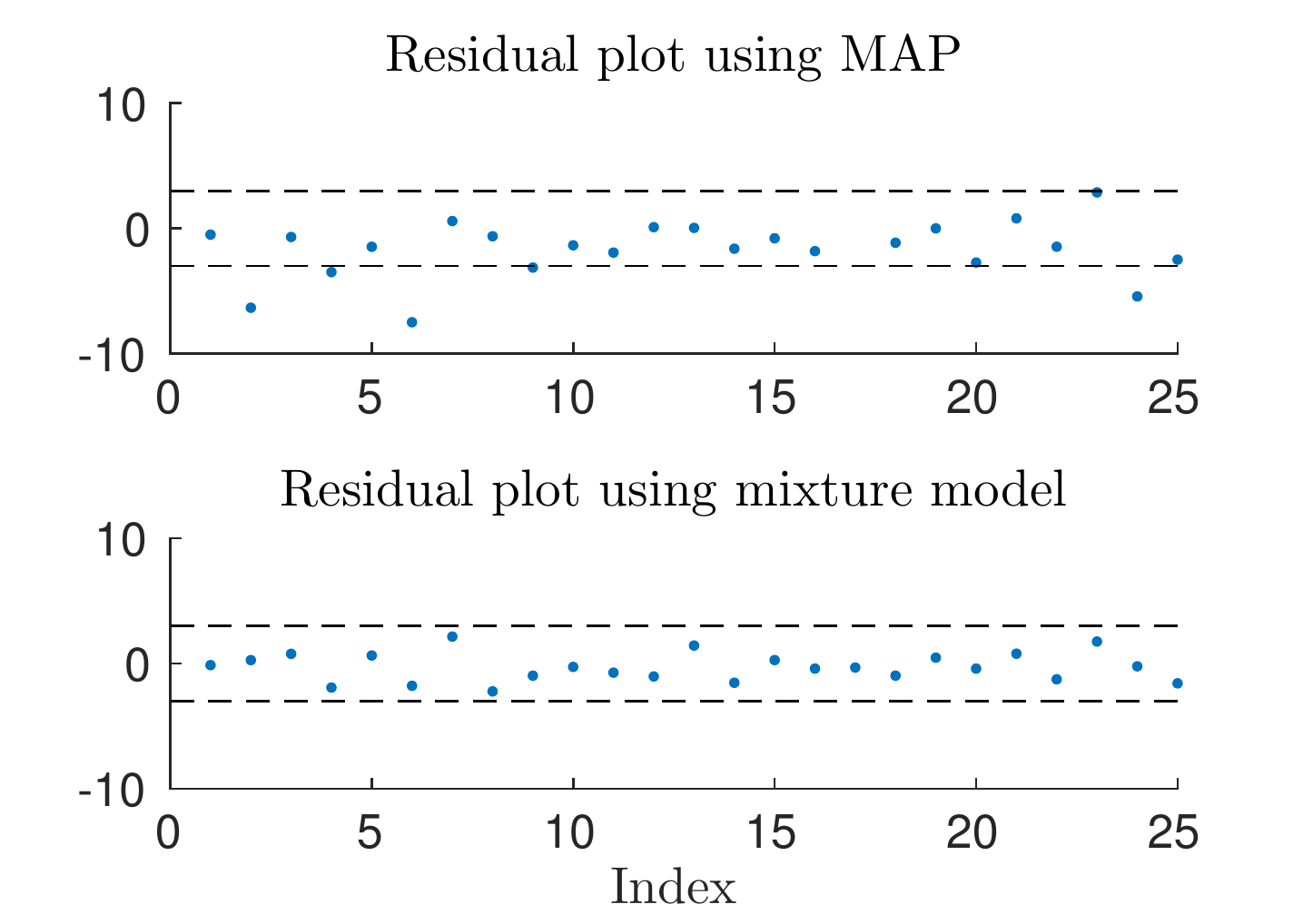}\label{subfig:branin_res}}
\caption{Projection of the negative log-posterior curves in the two dimensional
length-scale space for the Branin simulator. The minimum possible value of
$10^{-12}$ for the nugget $\boldsymbol\phi_\delta$ has been used for such
projection. The reference diagonal helps visualise the regions where the length
scales favour one dimension over the other.}
\label{fig:loglik_branin}
\end{figure}

For this example, two thousand samples were generated in each annealing level.
The parallel AIMS-OPT algorithm generated 7 annealing levels to produce the
samples in Figure \ref{subfig:branin_samples}. The RMSE of the MAP model is
7.068 whereas the RMSE of the mixture is 15.099 which is an indication that in
terms of brute prediction, the mixture model could be improved by taking more
samples. Figure \ref{subfig:branin_res} depicts the standardised residuals from
both the MAP approach (top) and the mixture model (bottom) using equations
\eqref{eq:mix_mean} and \eqref{eq:mix_cov} with uniform weights in the
sample. The standardised residuals are defined as
\begin{align}
r(\vec{x}) = \frac{y - \mu(\vec{x}) }{\sqrt{\sigma^2(\vec{x})}}, \label{eq:residuals}
\end{align}
where $y$ is the output for configuration $\vec{x}$, $\mu(\vec{x}) = E [y |
\vec{x}, \mathcal{D}]$ and $\sigma^2(\vec{x}) = \text{var} (y | \vec{x},
\mathcal{D} )$, the posterior mean and variance for configuration $\vec{x}$
\citep[see][\added{for a more detailed discussion on diagnostics}]{Bastos2009}.
By marginalising the hyper-parameters it is clear that our estimation is a more
robust in terms of error prediction. Even with such limited amount of
information the residuals suggest that the uncertainty is being incorporated
appropriately in the marginalised predictive posterior distribution in equation
\eqref{eq:pred_post}. The standardised residuals are inside the \replaced{95\%
confidence bands, assuming approximate normality}{ bands}, though not too close
to 0. This \replaced{is an indicator that although greater variability is
expected, excessively large variances are avoided. This is done by means of the
integrated predictive distribution and the use of the proposed sampler to build
a mixture of emulators leaving the predicted errors inside appropriate bounds}{
avoids an excessively large variance for the predictive distribution}.

\subsection{2D Model}

This function has already been used as an example for emulation purposes and can
be found in GEM-SA software web page
(\url{http://ctcd.group.shef.ac.uk/gem.html}). Even though it is a two
dimensional problem it also serves as a good illustration of the importance of
estimating the hyper-parameters of a Gaussian process with a multi-modal
sampler. The mathematical expression for this simulator is
\begin{align}
f({\bf x}) = \left[1-\exp\left(-\frac{0.5}{x_2}\right)\right] \, \left(\frac{ 2300 x_1^3 + 1900
x_1^2 + 2092 x_1 + 60}{100 x_1^2 + 500 x_1^2 + 4 x_1 + 20} \right). 
\end{align}

As in the previous case, the training runs and the modelling assumptions fail to
summarise the uncertainty in a uni-modal posterior distribution. The design
points where selected using a Latin hypercube in the rectangle
$[0,1]\times[0,1]$. It can be seen from Figure \ref{subfig:bastos_1} that the
modes are separated by a wide valley of low posterior probability, which can
become an overwhelming task for traditional MCMC samplers. The proposed sampler
is able to cope with all local and global spread dynamics present in the
neighbourhoods of the modes it encounters, as shown in Figure
\ref{subfig:bastos_2}.

\begin{figure}[H]
\centering 
\subfloat[Level curves with reference prior]{\includegraphics[draft=false,width=.32\linewidth]{./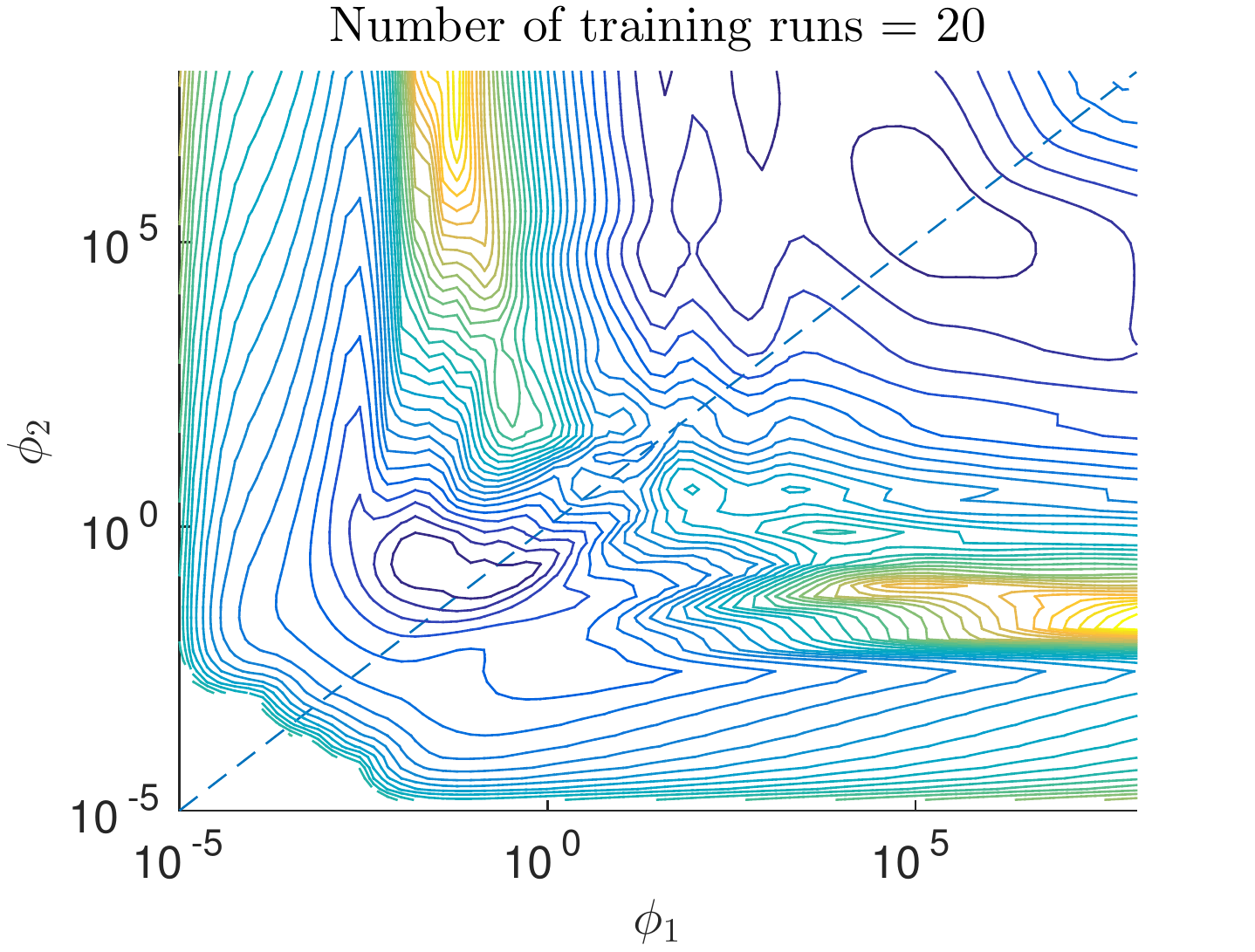}\label{subfig:bastos_1}}
\qquad
\subfloat[Parallel AIMS-OPT sample]{\includegraphics[draft=false,width=.32\linewidth]{./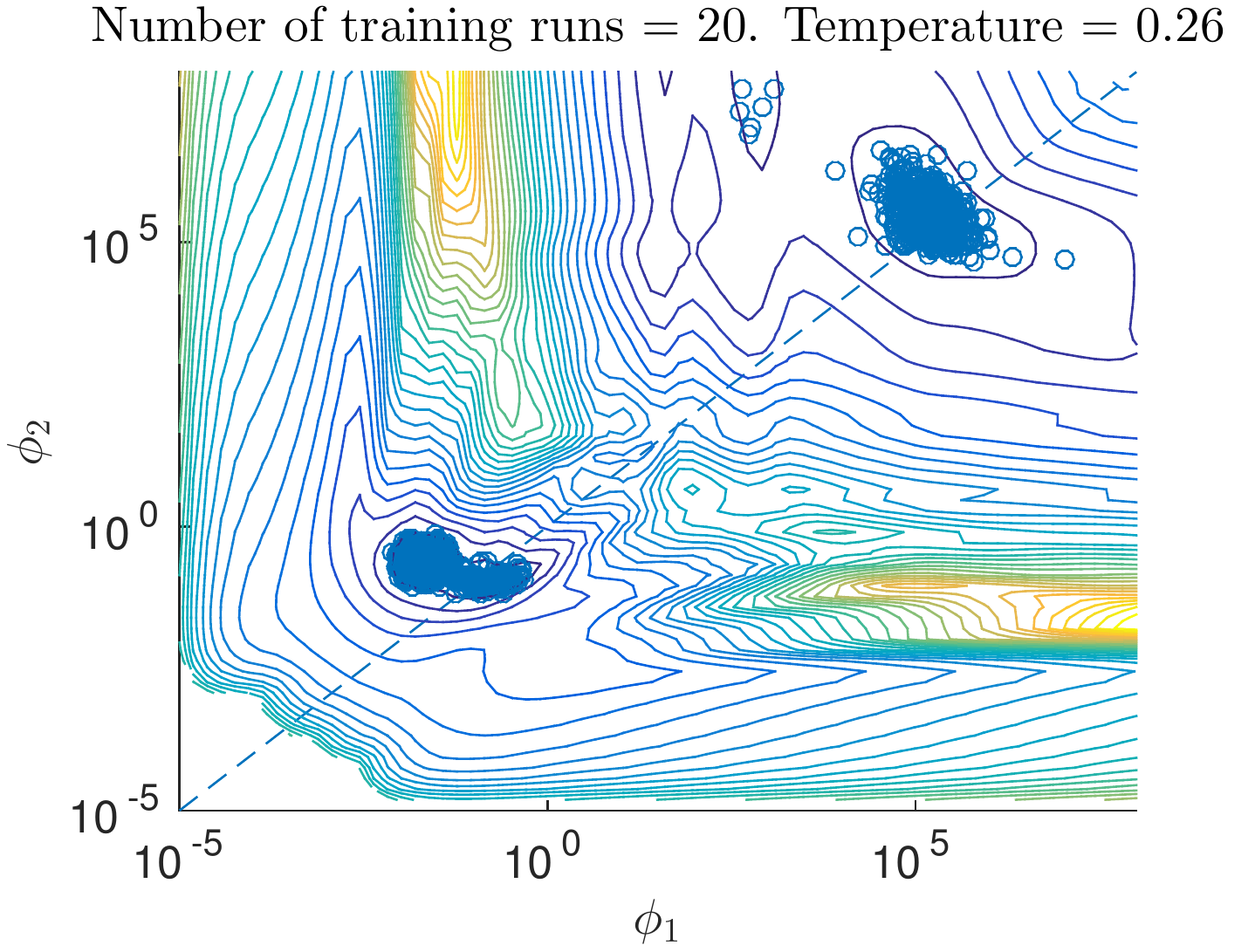}\label{subfig:bastos_2}} \\
\subfloat[Level curves with uniform prior]{\includegraphics[draft=false,width=.32\linewidth]{./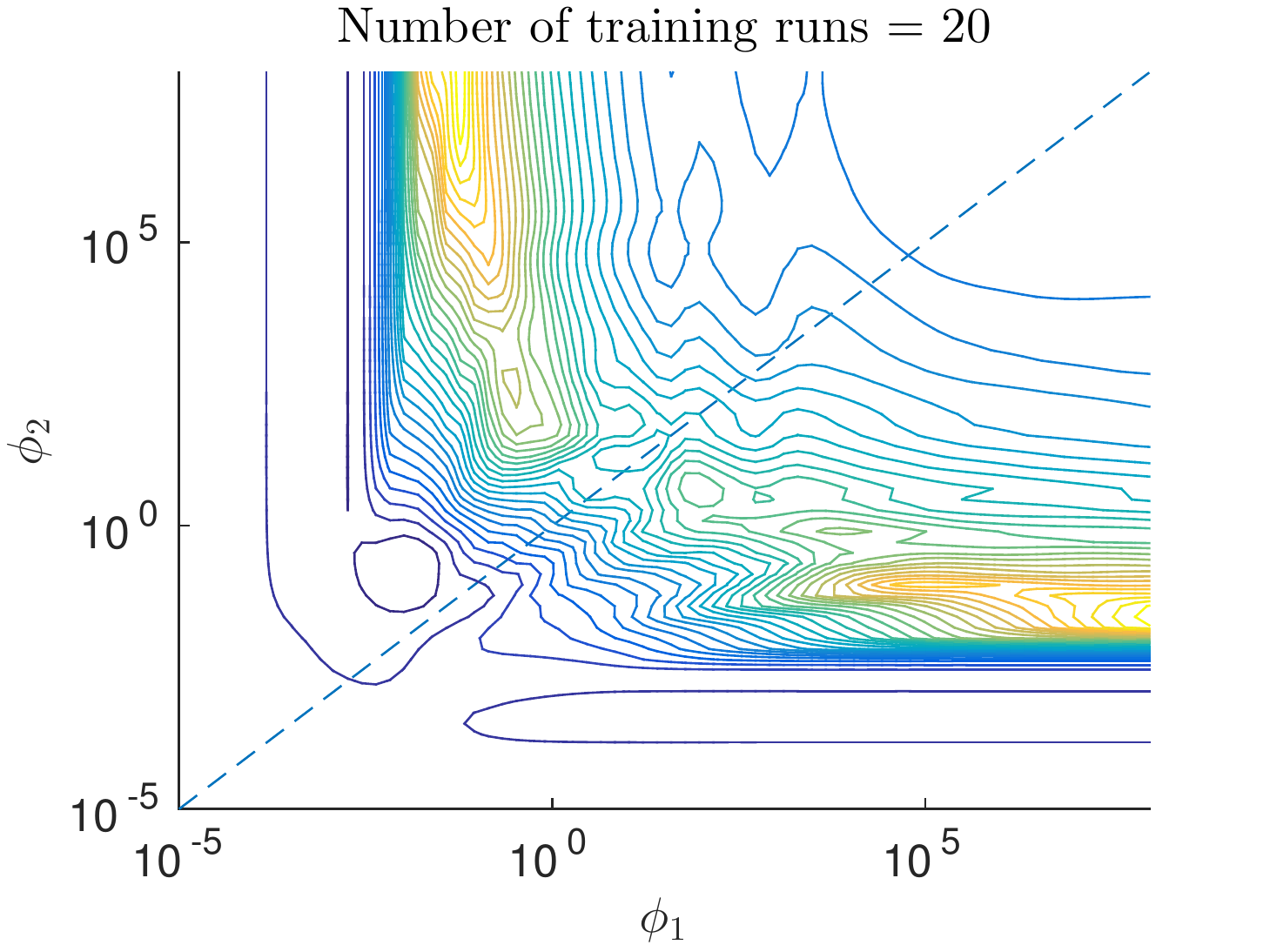}\label{subfig:bastos_3}} \qquad
\subfloat[Residuals plot]{\includegraphics[draft=false,width=.32\linewidth]{./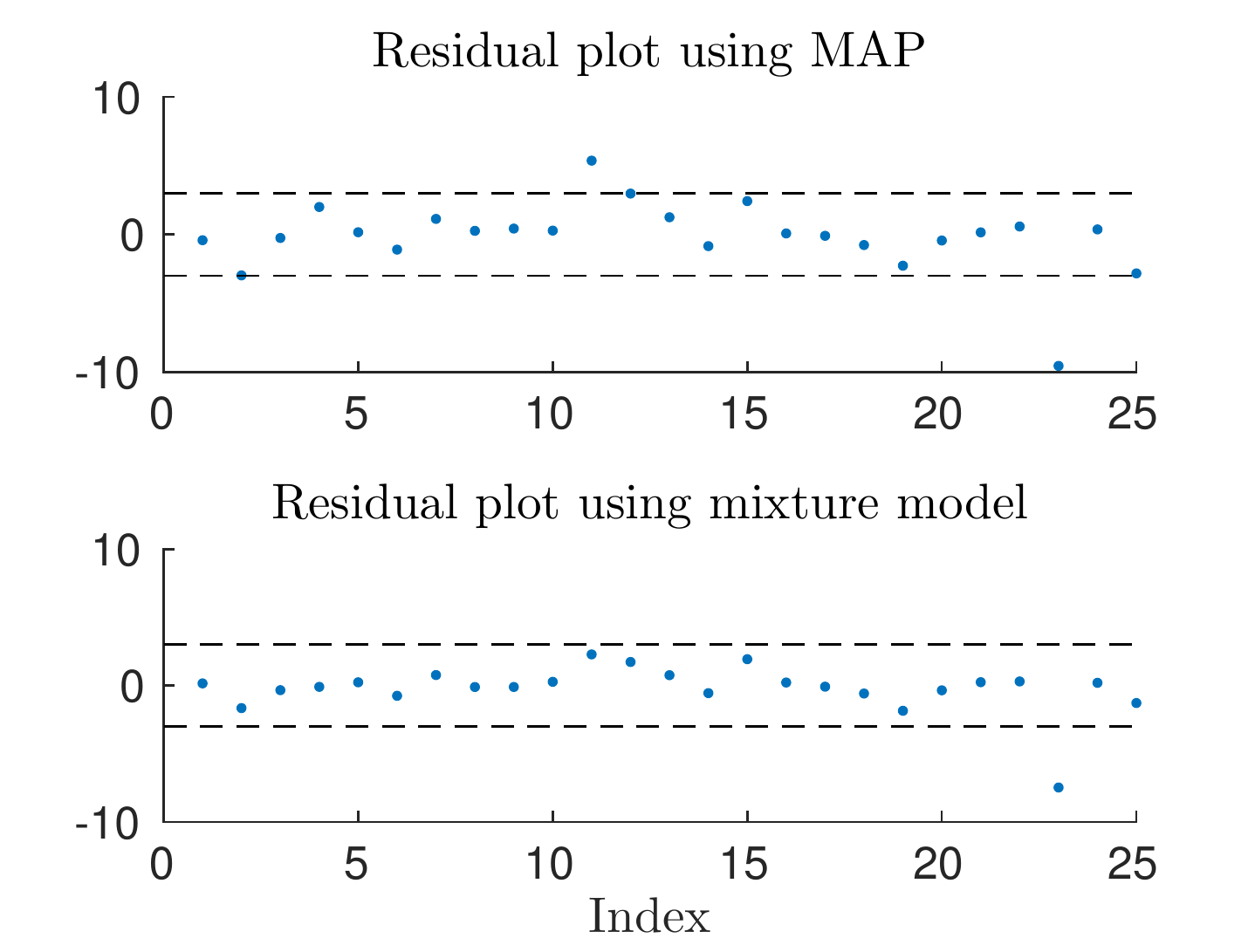}\label{subfig:bastos_res}
} 
\caption{Projection of the negative log-posterior curves in the two dimensional
length-scale space for the 2D Model simulator. The minimum possible value of
$10^{-12}$ for the nugget $\boldsymbol\phi_\delta$ has been used for such
projection. The reference diagonal helps visualise the regions where the length
scales favour one dimension over the other.}
\label{fig:loglik_bastos}
\end{figure}

Depicted in Figures \ref{subfig:bastos_1} and \ref{subfig:bastos_3} the use of
the reference prior in the posterior distribution removes probability mass from
the neighbourhood around the origin. This validates the use of the reference
prior to cut out regions from the \deleted{sample } space \added{of 
hyper-parameters} for the sampling and exploit the most information contained
in the data available, namely, the training runs $\mathcal{D}$. As in the
previous example, two thousand samples were generated in each annealing level.
The parallel AIMS-OPT algorithm generated 7 annealing levels to produce the
samples in Figure \ref{subfig:bastos_2}. In terms of prediction accuracy, we now
obtain that the RMSE is 1.356 for the MAP estimate and 1.345 for the mixture
model. While as for the residuals, we can see from Figure
\ref{subfig:bastos_res} that the mixture model \replaced{allows for a more
robust prediction of the error, by means of increasing the variability in
particular locations. This can be seen as the standardised residuals are
concentrated within the 95\% confidence bands of an approximate assumed
normality}{ consistently narrows the spread of the prediction}, resulting in a
more robust estimation of the error \replaced{by the use of a mixture model}{
predictions}. \added{This motivates the use of multi-modal density samplers in
the context of optimisation, where if a single candidate is provided the overall
error prediction of the emulator might be biased towards more concentrated
predictions around the mean estimation.}

\subsection{Nilson-Kuusk Model}

This simulator is built from the Nilson-Kuusk model for the reflectance for
\added{a} homogeneous plant canopy. Such model is a five dimensional simulator
whose inputs are the solar zenith angle, the leaf area index, relative leaf
size, the Markov clumping parameter and a model parameter $\lambda$
\citep[see][for further details on the model itself and the meaning of the
inputs and outputs]{Nilson1989}. For the analysis presented in this paper a
single output emulator is assumed and the set of the inputs have been rescaled
to fit the hyper-rectangle $[0,1]^5$ on a five dimensional space as in
\cite{Bastos2009}.

As in the previous test cases, the design points were chosen by Latin hypercube
designs (100 for this case). In this example, the dimension of the problem makes
it impossible to plot the level curves of the posterior distribution for the
length scale hyper-parameters \added{to visualize potential multiple modes}.
However, \replaced{the samples can be visualized by means of a box-plot as shown
in Figure \ref{fig:kuusk_plot}, where the red line denotes the median, the edges
of the box the 25\textsuperscript{th} and 75\textsuperscript{th} percentiles,
and the whiskers cover the most extreme cases. The samples are obtained after
completing 10 levels of the parallel AIMS-OPT algorithm. The box-plots of the
approximate optimal solutions}{ the plots of the estimated densities computed
from the sample obtained after 11 levels of the parallel AIMS-OPT algorithm}
strongly suggest that the samples come from a multi-modal posterior
distribution. \replaced{This can be seen from the location of the edges of the
boxes and the median for any given input. The last input possesses a very
limited spread which might denote a high concentration around one mode. Note
that as the magnitude of the length-scale increases, thus reducing the
sensitivity of the simulator to such input, the length-scales are located in
what can be seen as either a plateau or regions of modes with negligible
difference in the posterior density.}{ , even though the samples are
concentrated around one mode.} Additionally, \added{from the range of values
that are covered in log-space,} it can be noted that \replaced{the output of the
simulator appears to be insensitive to changes of the third and fourth input}{
for the third and fourth input of the model the output changes very smoothly}.
Furthermore, a limit-case emulator can be suggested by the \added{box-}plot in
\replaced{Figure \ref{fig:kuusk_plot}}{\ref{subfig:kuusk_phi4}} by considering a
surrogate with no \added{third and} fourth inputs in the model. Notice the
scale\added{s} for such hyper-parameter\replaced{s in logarithmic space}{ and
the mode appearing around $e^{10}$}.
\begin{figure}[H]
\centering 
\includegraphics[draft=false, width=.39\linewidth]{./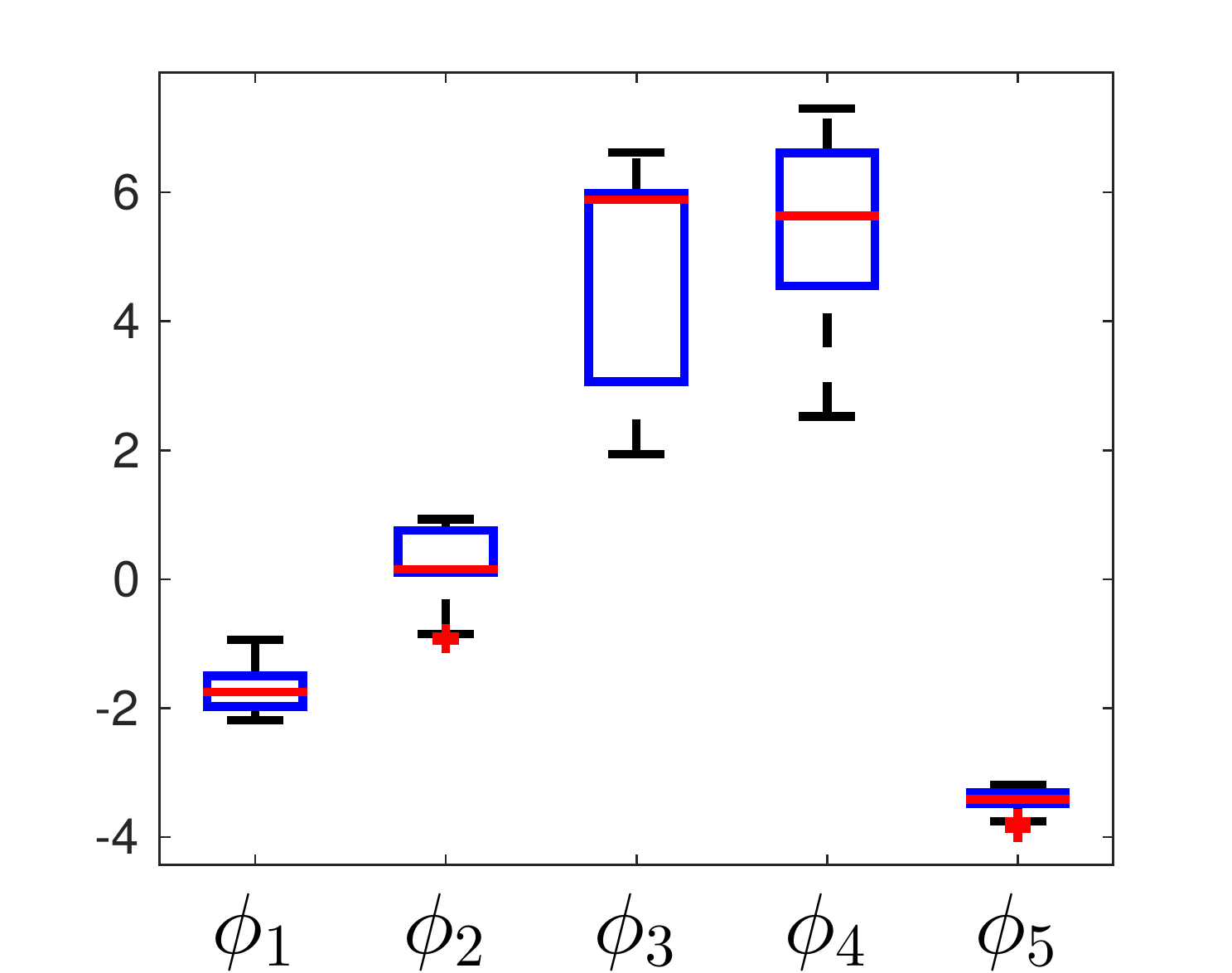}
\caption{\replaced{Box-plots of the sample of length-scales obtained by the
parallel asymptotically independent Markov sampling.}{Estimated densities for
the hyper-parameters and objective function obtained from the Nilson-Kuusk
model.}}
\label{fig:kuusk_plot}
\end{figure}

Due to the larger number of dimensions, five thousand samples were generated for
each annealing level. In this case we have that the RMSE of the MAP estimate is
0.022 while the RMSE of the mixture proposal is 0.021 which is a consequence of
the \replaced{posterior distribution}{ sample} being highly concentrated around
one mode\added{, in a particular set of length-scales ($\phi_1, \phi_2$ and
$\phi_5$) while being less specialised for the less sensitive ones ($\phi_3$ and
$\phi_4$)}. In Figure \ref{fig:kuusk_res} there is evidence that even with such
behaviour the predictive error is improved by narrowing the spread of the
standardised residuals, \added{as before, a consequence of an increased
estimation of the variability in particular locations}. In this case the
residuals cannot all be contained in the \added{approximate normality 95\%}
bands but as noted by \cite{Bastos2009} in their experiments there is strong
evidence that \deleted{in this case} more runs of the simulator are needed to
adequately built a statistical surrogate. \added{Due to the highly concentrated
posterior density around the high sensitive length-scales there seems to be no
apparent gain from using the mixture model. However, it can be noted from Figure
\ref{fig:kuusk_plot} that by acknowledging the variability of the 
hyper-parameters, a better understanding of the sensitivity of the simulator
with respect to the inputs is achieved. An improved and more robust uncertainty
analysis of the simulator can be provided in this case understanding the wide
spread of length-scales for particular dimensions. For instance if screening is
performed, the MAP estimate will fail to summarise the wide posterior density
with respect to $\phi_3$ and $\phi_4$ and this in turn, will provide partial
information. This analysis  cannot be performed solely by maximising the
posterior density. Therefore the proposed method provides additional insight of
the sensitivity of both simulator and emulator.}
\begin{figure}[H] \centering
\includegraphics[draft=false,width=.39\linewidth]{./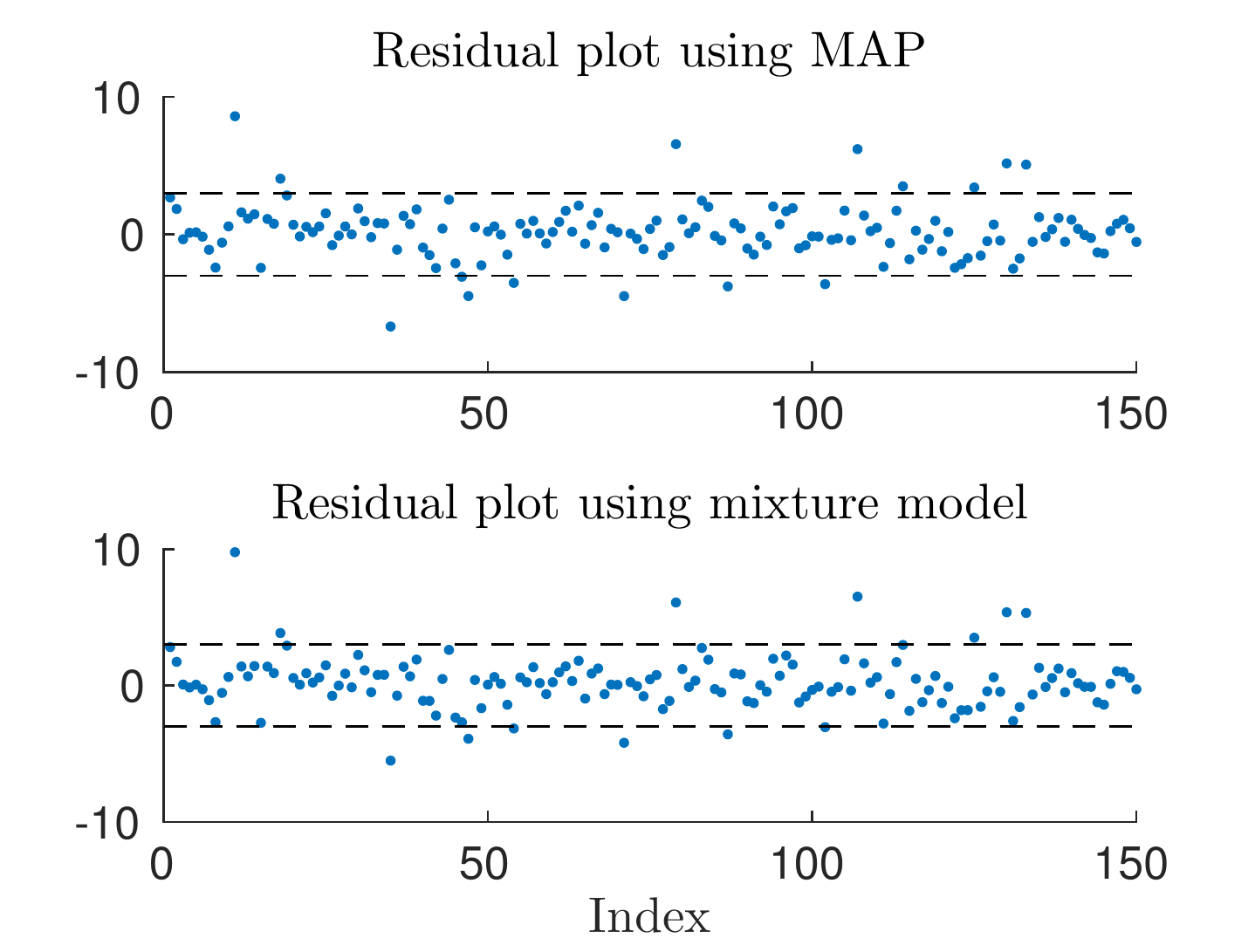}
\caption{Residuals plot for the Nilson-Kuusk simulator}  \label{fig:kuusk_res}
\end{figure}

\section{Conclusions}	\label{sec:conclusions}

This paper proposes to estimate the hyper-parameters of a Gaussian process using
a new sampler based on the Asymptotically Independent Markov Sampling (AIMS)
method. The AIMS-OPT algorithm, used in stochastic optimisation, provides a
robust computation of the MAP estimates of the hyper-parameters. This is done by
providing a set of approximations to the optimal solution instead of a single
approximation as it is so frequently done in the literature. The problem is
approached in a combined effort from the computational, optimisation and
probabilistic perspectives which serve as solid foundations for building
surrogate models for computationally expensive computer codes.

The original AIMS algorithm has been extended to provide an efficient sampler in
computational terms, by means of parallelisation, as well as an effective
sampler with good mixing qualities, by means of both the delayed rejection and
adaptive modification exposed. It has been demonstrated that by using the
parallel AIMS-OPT algorithm it is possible to acknowledge uncertainty in the
structure of the emulator proposed as illustrated in the examples provided.
Structural uncertainty should be taken into account to determine when the
training runs available are sufficient to narrow the posterior distribution of
the hyper-parameters to a uni-modal convex distribution. Even though it has been
proven to be effective in lower and medium dimensional design spaces, research
in high dimensional spaces has been left for future research.

%------------------------------------------------

\section*{Acknowledgements}

The first author gratefully acknowledges the Consejo Nacional de Ciencia y
Tecnolog\'{\i}a (CONACYT) for the award of a scholarship from the Mexican
government.

%------------------------------------------------
\appendix
\section{}\label{appendix}

In this appendix, a proof that using the delayed rejection algorithm in the AIMS
framework leaves the target distribution $p_k(\cdot)$ invariant is provided.

A sufficient condition to prove that indeed $p_k(\cdot)$ is the stationary
distribution for the Markov chain is to prove that the detailed balance
condition is satisfied. Since the first stage approval has been proven to
satisfy the detailed balance condition in \cite{Zuev2013}, it will only be
proved for the second stage sampling.

Let $f_k(\boldsymbol\phi_2| \boldsymbol\phi_0)$ describe the AIMS-OPT delayed
transitions in the $k$-th annealing level from $\boldsymbol\phi_0 \rightarrow
\boldsymbol\phi_2$, with $\boldsymbol\phi_2 \neq \boldsymbol\phi_0$. Let
$\boldsymbol\phi_1$ be the rejected transition in the first stage, for any
$\boldsymbol\phi_0, \boldsymbol\phi_1, \boldsymbol\phi_2 \in \Phi \backslash \{
\boldsymbol\phi_1^{(k-1)}, \ldots, \boldsymbol\phi_n^{(k-1)} \}$. It will be
proved that for such candidates the following holds:
\begin{align}
p_k(\boldsymbol\phi_0) f_2(\boldsymbol\phi_2 | \boldsymbol\phi_0) = p_k(\boldsymbol\phi_2)
f_2(\boldsymbol\phi_0 | \boldsymbol\phi_2). \label{eq:balance_cond}
\end{align}
As seen from the description in section \ref{sec:parallel_delayed} it follows that 
\begin{align}
f_k(\boldsymbol\phi_2 | \boldsymbol\phi_0 ) &=
\underbrace{\hat{p}_{k,n}(\boldsymbol\phi_1)}_{\text{generate }\boldsymbol\phi_1} \, \underbrace{(1
- a_1(\boldsymbol\phi_0, \boldsymbol\phi_1))}_{\text{reject }\boldsymbol\phi_1} \,
\underbrace{S_2(\boldsymbol\phi_2 | \boldsymbol\phi_0, \boldsymbol\phi_1)}_{\text{generate
}\boldsymbol\phi_2} \, \underbrace{a_2(\boldsymbol\phi_0, \boldsymbol\phi_2)}_{\text{accept
}\boldsymbol\phi_2}, 
\end{align}
where it is used the fact that AIMS-OPT generates first stage proposals with an independent
approximate distribution. Recall that the probability of a second stage proposal is 
\begin{align}
a_2(\boldsymbol\phi_0, \boldsymbol\phi_2) = \min \left\lbrace 1, \frac{p_k(\boldsymbol\phi_2) \,
S_2( \boldsymbol\phi_0 | \boldsymbol\phi_2, \boldsymbol\phi_1) \, (1-a_1(\boldsymbol\phi_2,
\boldsymbol\phi_1)) }{ p_k(\boldsymbol\phi_0) \, S_2( \boldsymbol\phi_2 | \boldsymbol\phi_0,
\boldsymbol\phi_1) \, (1-a_1(\boldsymbol\phi_0, \boldsymbol\phi_1)) } \right\rbrace
\end{align}
and the fact that for any two positive numbers $a,b$ the equality $a\, \min \{1,
b/a\} = b \, \min \{1, a/b \} $ is satisfied. With these two equalities we can
substitute the left hand side of equation \eqref{eq:balance_cond} as
\begin{align}
p_k(\boldsymbol\phi_0) f_2(\boldsymbol\phi_2 | \boldsymbol\phi_0) &=
\hat{p}_{k,n}(\boldsymbol\phi_1)\, \left[ p_k(\boldsymbol\phi_0)\, S_2( \boldsymbol\phi_2 |
\boldsymbol\phi_0, \boldsymbol\phi_1) \, (1-a_1(\boldsymbol\phi_0, \boldsymbol\phi_1)) \right] \,
a_2(\boldsymbol\phi_0, \boldsymbol\phi_2) \nonumber \\
&= \hat{p}_{k,n}(\boldsymbol\phi_1)\, \left[ p_k(\boldsymbol\phi_2)\, S_2( \boldsymbol\phi_0 |
\boldsymbol\phi_2, \boldsymbol\phi_1) \, (1-a_1(\boldsymbol\phi_2, \boldsymbol\phi_1)) \right] \,
a_2(\boldsymbol\phi_2, \boldsymbol\phi_0)\nonumber \\
&= p_k(\boldsymbol\phi_2) \, f_2(\boldsymbol\phi_0 | \boldsymbol\phi_2), 
\end{align}
which proves the detailed balance for the second stage proposal. Note that the
proof has been made with no further assumptions about the second stage proposal
distribution $S_2(\boldsymbol\phi_2|\boldsymbol\phi_0, \boldsymbol\phi_1)$, as
it can be defined from several candidates. In this work, a symmetric proposal
that ignores the rejected sample has been used since it can be interpreted as a
Random Walk safeguard against a possible ill approximation done by the
independent sampler.  

%---------------------------------------------------------------------------------------- 
%   REFERENCE LIST 
%----------------------------------------------------------------------------------------
\newpage
\section*{References}

%----------------------------------------------------------------------------------------

\end{document}